\documentclass[copyright,creativecommons]{eptcs}
\providecommand{\event}{AFL 2026} 

\usepackage{iftex}

\ifpdf
  \usepackage{underscore}         
  \usepackage[T1]{fontenc}        
\else
  \usepackage{breakurl}           
\fi
\usepackage{amsfonts}
\usepackage{amssymb}
\usepackage{amsmath}
\usepackage{amsthm}
\usepackage{bbm}
\usepackage{mathtools}
\usepackage{stmaryrd}
\usepackage{hyperref}
\usepackage{enumitem}
\usepackage{xifthen}

\usepackage{xcolor}
\usepackage{microtype}
\AtBeginDocument{\microtypesetup{protrusion=false,expansion=false}}

\usepackage{bbding}
\DeclareMathOperator{\Ima}{Im}
\DeclareMathOperator{\rank}{rk}
\DeclareMathOperator{\mxrk}{mxrk}
\newcommand*{\SET}[1]{\{#1\}}
\newcommand*{\SETC}[2]{\{#1\,|\,#2\}}
\newcommand{\nat}{\mathbb N}
\newcommand{\aut}{\mathcal A}

\newcommand{\sring}{\mathbb K}
\newcommand{\sringset}{K}
\newcommand{\bigO}{O}
\newcommand{\bigOmega}{\Omega}
\newcommand{\sig}{\Sigma}
\newcommand{\trees}{\mathrm T}
\newcommand{\subst}[2]{#1\llbracket#2\rrbracket}
\DeclareMathOperator{\wt}{wt}
\newcommand{\lang}[1][]{\ifthenelse{\isempty{#1}}{L}{L_{#1}}}%
\newcommand{\polynom}[2]{\mathrm{Pol}(#1, #2)}

\newcommand{\X}{\mathrm X}
\newcommand{\Z}{\mathrm Z}
\newcommand{\Rec}[1]{\mathrm{Rec}(#1)}
\newcommand{\RecA}[1]{\mathrm{Rec_{alt}}(#1)}
\newcommand{\RecPol}[1]{\mathrm{Rec_{pol}}(#1)}
\newcommand{\HOM}{\xhom{}{}{}}
\newcommand{\xhom}[3]{\mathrm{\ifthenelse{\isempty{#1}}{}{#1\text{-}}HOM}\ifthenelse{\isempty{#2}}{}{_{#2}}\ifthenelse{\isempty{#3}}{}{(#3)}}

\def\claimname{Claim}

\def\definitionname{Definition}
\def\examplename{Example}

\def\lemmaname{Lemma}

\def\propositionname{Proposition}

\def\sectionname{Section}

\def\theoremname{Theorem}

\theoremstyle{plain}
\newtheorem{theorem}{Theorem}[section]

\newtheorem{corollary}[theorem]{Corollary}
\newtheorem{lemma}[theorem]{Lemma}

\newtheorem{proposition}[theorem]{Proposition}

\theoremstyle{definition}
\newtheorem{definition}[theorem]{Definition}
\newtheorem{example}[theorem]{Example}

\theoremstyle{remark}
\newtheorem{remark}[theorem]{Remark}

\newtheorem{claim}{Claim}[theorem]

\title{Weighted Alternating Tree Automata}
\author{Olle Torstensson
\institute{Link\"{o}ping University\\ Link\"{o}ping, Sweden}
\email{olle.torstensson@liu.se}
}
\def\titlerunning{Weighted Alternating Tree Automata}
\def\authorrunning{O. Torstensson}
\begin{document}
\maketitle

\begin{abstract}
We define weighted alternating tree automata --- a generalization of weighted (non-alternating) tree automata, (unweighted) alternating tree automata, and weighted alternating (string) automata --- over commutative semirings. We study their expressive power, and show that it is equivalent to that of weighted tree automata exactly when the employed semiring is locally finite. In our main result, we characterize the class of weighted tree languages defined by weighted alternating tree automata as the closure of the recognizable weighted tree languages under inverse tree homomorphisms.
\end{abstract}

\section{Introduction}
\label{sec:intro}

The basic model of finite automata over ordered, finite, and labeled trees has seen a great number of generalizations over time. One direction has been to equip the automaton with a weight structure, usually a semiring, allowing it to compute a quantitative output, resulting in a \emph{weighted tree automaton (wta)}.
The output of a wta is produced by letting the addition and multiplication operators of the semiring accumulate the individual transition weights (elements of the semiring) that are attached to each transition, according to a specific scheme. Alternation is an orthogonal generalization that introduces the concept of universality (roughly, \emph{all} computations going forward should be accepting) to the automaton, which in the basic version is only equipped to handle existentiality (roughly, \emph{some} computation going forward should be accepting).

In this paper, we perform both these generalizations at once by introducing \emph{weighted alternating tree automata (wata)} over commutative semirings. Intuitively, they differ from ordinary weighted tree automata in that they allow arbitrary polynomials over the semiring to describe how computations on subtrees are to be combined, as opposed to having these polynomials be of a specific limiting form; while wta have the power to non-deterministically perform multiple computations on the same subtree and sum up the results (corresponding to existentiality), wata may also \emph{multiply} them (corresponding to universality). For example, in a wata over, e.g., the natural semiring and some alphabet containing a $\gamma$ of positive rank there may be a transition
\begin{equation*}
    \delta(q, \gamma) = 2 \cdot x_{(q,1)}^2 \,.
\end{equation*}
where the variable $x_{(q,1)}$ represents the computed weight of the first direct subtree in the state $q$. The kind of non-linearity displayed here, where references to more than one computation carried out on the \emph{same subtree} occur \emph{in the same monomial}, allows for some complex weighting patterns generally not achievable by wta.

Tree automata were extended with alternation by Slutzki~\cite{slutzki1985alternating-tre}, who showed that, while it may add succinctness to the formalism, it adds nothing in terms of expressive power.
\sectionname~\ref{sec:two-mode} explicitly establishes wata as the weighted generalization of alternating tree automata, and we will show that, over locally finite semirings, this holds true also in the weighted case. In contrast, by leveraging a result on weighted alternating automata over strings, we will see that alternation does indeed contribute expressive power to weighted tree automata whenever the employed semiring is not locally finite (\sectionname~\ref{sec:power}).

In \sectionname~\ref{sec:characterization} we will see that there are intimate ties between the weighted version of universality described above, and the notion of copying subtrees. Intuitively, if each subtree is duplicated ahead of being processed, with each copy being treated as a unique subtree, computations on what is essentially the same subtree may be multiplied freely, and the same effect as that of the earlier described non-linearity in the transitions can be achieved using wta style transitions. Using this intuition, we can precisely capture the complexity added to wta by alternation via (non-linear) tree homomorphisms. Whereas the unweighted recognizable tree languages are closed under inverse tree homomorphisms without restrictions, the same does not hold in general in the weighted case. This is due precisely to the copying capability of non-linear tree homomorphisms, which allows for exponential expansions of trees and thereby weight computations that can be too complex (e.g., grow too fast) to be handled by a wta on the original tree. A wata, however, is adequately equipped to compute just such functions. In fact, we characterize the weighted tree languages defined by wata as the closure of the recognizable tree languages under inverse tree homomorphisms (\theoremname~\ref{thm:characterization}).

\paragraph{Related work.}
\label{sec:related}
Weighted alternating automata for \emph{strings} were introduced over commutative semirings by Kostol{\'a}nyi and Mi{\v s}{\'u}n~\cite{kostolanyi2018alternating-wei}, and then further studied by Grabolle~\cite{grabolle2023a-nivat-theorem}. Leveraging the fact that strings are just trees without branching, we use some of their results.

The weighted alternating parity tree automata of~\cite{baader2016reasoning-with-} are introduced as a means to extend Description Logics and are not given much treatment in terms of expressive power. Their notion of trees is different from ours in that they, e.g., are allowed to be infinite and consequentially their model does not have all that much in common with ours.

Ghorani and Zahedi~\cite{ghorani2016alternating-reg} define several variations of tree automata and tree grammars implementing alternation in a weighted setting, and compare them in terms of expressive power. While their most basic automaton model is defined in a way similar to ours, it employs a complete residuated lattice as a weight structure, making results largely incomparable.

Finally, an equivalent version of the automaton introduced in this paper has appeared as an ad-hoc device in the larger context of dynamically weighted tree transducers~\cite{drewes2025dynamically-wei}. Apart from its connection to top--down weighted tree transducers with regular look-ahead, it has not been studied in any depth, and not explicitly connected to the theory of alternation.

\section{Preliminaries}
\label{sec:prelim}

We let $\nat = \SET{0, 1,\dots}$ denote the natural numbers, and for each $k \in \nat$, we let $[k] = \SET{1, \dots, k}$. The inverse of a function $f$ is denoted $f^{-1}$, and its image $\Ima(f)$. For any two functions $f_1\colon A \to B$ and $f_2\colon B \to C$, we let $f_1 \mathrel{;} f_2$ denote their composition, i.e., $(f_1 \mathrel{;} f_2)(a) = f_2(f_1(a))$ for every $a \in A$. We define the composition of classes $F_1$, $F_2$ of functions similarly: $F_1 \mathrel{;} F_2 = \SETC{f_1 \mathrel{;} f_2}{f_1 \in F_1, f_2 \in F_2}$.
Furthermore, for any set $A$, we fix a set $\X_A = \SETC{x_a}{a \in A}$ of variables ranging over polynomials and a set $\Z_A = \SETC{z_a}{a \in A}$ of variables ranging over trees.\footnote{These variables are assumed to not appear a priori in any of the ranked alphabets considered in the paper.}

\paragraph{Trees.}
A \emph{ranked alphabet} is a finite non-empty set $\sig$ of \emph{symbols}, together with an associated function ${\rank_\sig \colon \sig \to \nat}$, such that $\rank_\sig^{-1}(0) \neq \emptyset$, giving each symbol its \emph{rank}.
For each $k \in \nat$, $\sig^{(k)}$ is the set of all symbols in $\sig$ of rank $k$, and the notation $\sigma^{(k)}$ is sometimes used to indicate that $\sigma \in \sig^{(k)}$. The \emph{maximal rank} of a ranked alphabet $\sig$, $\mxrk(\sig)$, is the maximal $k \in \nat$ such that $\sig^{(k)} \neq \emptyset$.
\begin{quote}
    \emph{For the rest of this paper, $\sig$ denotes an arbitrary ranked alphabet.}
\end{quote}
Given an arbitrary set $A$, the \emph{set of trees over $\sig$ indexed by $A$}, denoted $\trees_\sig(A)$, is the smallest set $T$ containing $A$ and $\sig^{(0)}$, and whenever $t_1, \dots, t_k \in T$, we have $\sigma[t_1, \dots, t_k] \in T$, for each $k \geq 1$ and $\sigma \in \sig^{(k)}$. We write $\trees_\sig$ in place of $\trees_\sig(\emptyset)$.
If $\mxrk(\sig) = 1$ and $| \sig^{(0)} | = 1$, we call $\sig$ a \emph{string alphabet} and denote the trees over $\sig$ without brackets.
A tree $t$ in $\trees_\sig(A)$ is called \emph{linear} (respectively, \emph{non-deleting}) with respect to $A' \subseteq A$ if each element in $A'$ appears in $t$ at most (respectively, at least) once. For a set $A$, a finite $I \subseteq A$, together with trees $t$ and $\SET{t_i}_{i \in I}$ in $\trees_\sig(\Z_A)$, we denote by $\subst{t}{z_i \leftarrow t_i \mid i \in I}$ the tree in $\trees_\sig(\Z_A)$ that is the result of simultaneously substituting each $z_{i}$ in $t$ with $t_i$, for all $i \in I$.

\paragraph{Semirings and polynomials.}
A \emph{commutative semiring} is a $5$-tuple $\sring = (\sringset, +, \cdot, 0, 1)$, consisting of a carrier set $\sringset$, two associative and commutative operators $+$ (addition) and $\cdot$ (multiplication), and two elements $0,1 \in \sringset$, such that: $0 + a = 1 \cdot a = a$, $0 \cdot a = 0$, and $a \cdot (b + c) = a \cdot b + a \cdot c$ for all $a,b,c \in \sringset$.
Addition and multiplication are extended to operate on finite sets via the standard symbols $\sum$ and $\prod$ (with an empty sum and product being equal to $0$ and $1$, respectively), and $a^d$ is as usual a shorthand for $\prod_{i \in [d]} a$, for any $a \in K$ and $d \in \nat$.
We will generally allow a symbol to denote both a semiring and its carrier set.
For any semiring $\sring$ and $A \subseteq \sring$, the \emph{subsemiring of $\sring$ generated by $A$} is the semiring $\langle A \rangle = (A', +, \cdot, 0, 1)$, where $A'$ is the closure of $A \cup \SET{0,1}$ under the addition and multiplication of $\sring$.
If $\langle A \rangle$ is finite (has a finite carrier set) whenever $A$ is finite, we call $\sring$ \emph{locally finite}. The \emph{boolean semiring} $\mathbb{B} = (\SET{0,1}, \vee, \wedge, 0, 1)$ is an example of a (locally) finite commutative semiring, and the \emph{natural semiring} $\nat = (\nat, +, \cdot, 0, 1)$ is an example of a commutative semiring that is \emph{not} locally finite.
\begin{quote}
    \emph{For the rest of this paper, $(\sring, +, \cdot, 0, 1)$ denotes an arbitrary commutative semiring.}
\end{quote}
Given a set $A$, we let $\polynom{\sring}{\X_A}$ denote the set of \emph{polynomials over variables in $\X_A$ and coefficients in $\sring$}. We will frequently and tacitly assume that polynomials in $\polynom{\sring}{\X_A}$ are uniquely represented as finite sums of distinct \emph{monomials} $m_1, \dots, m_\ell$ --- i.e., polynomials without addition --- with each $m_i$ being of the form $c_i \cdot \prod_{a \in A'_i} x_a^{d_{(i,a)}}$ for some finite $A'_i \subseteq A$, where $c_i \in \sring$ is the \emph{coefficient} of $m_i$ and the maximal $d_{(i,a)} \in \nat$ is its \emph{degree}. The maximal degree of any constituting monomial is then the degree of the polynomial.
Analogously to substitution on trees, given a set $A$, some finite $I \subseteq A$, together with polynomials $\pi$ and $\SET{\pi_i}_{i \in I}$ from $\polynom{\sring}{\X_{A}}$, we denote by $\subst{\pi}{x_{i} \leftarrow \pi_i \mid i \in I}$ the polynomial in $\polynom{\sring}{\X_A}$ in which every $x_i$ in $\pi$ has been substituted for $\pi_i$, for all $i \in I$. Along with substitution, $\polynom{\sring}{\X_A}$ is also closed under addition and multiplication --- it is in fact itself a commutative semiring under the same operations as those for $\sring$, extended to polynomials in the natural way~\cite{fulop2022weighted-tree-a}.

\paragraph{Weighted tree languages.}
A \emph{weighted tree language (over $\sig$ and $\sring$)} is a function $\lang \colon \trees_\sig \to \sring$. A \emph{weighted tree automaton (wta, or $(\sig, \sring)$-wta)} is a tuple $(\sig, Q, Q_d, R, \sring, \wt)$, consisting of a ranked alphabet $\sig$ (of input symbols), a finite non-empty set $Q$ (of states) with $Q \cap \sig = \emptyset$, a set $Q_d \subseteq Q$ (of designated states), a ranked alphabet $R$ (of transition rules) such that $R^{(k)} \subseteq Q \times \sig^{(k)} \times Q^k$ for all $k \geq 0$, a semiring $\sring$, and a function $\wt \colon R \to \sring$. For convenience, we often represent the third component of a transition rule as a string, and sometimes represent a rule $\rho = (q,\sigma, q_1\dots q_k)$ together with its weight as $q[\sigma] \xrightarrow{\wt(\rho)} \langle q_1, \dots, q_k \rangle$. For any $q \in Q$ and $\sigma \in \sig$, $R_{q;\sigma}$ is the set of all rules in $R$ of the form $(q, \sigma, w)$.

A wta $\aut = (\sig, Q, Q_d, R, \sring, \wt)$ induces, for each of its states $q \in Q$, a weighted tree language $\lang[\aut; q]$, defined for each $t = \sigma[t_1, \dots, t_k] \in \trees_\sig$ by
\begin{equation*}
    \lang[\aut; q](t) = \sum_{\rho = (q, \sigma, q_1\dots q_k) \in R} \wt(\rho) \cdot \prod_{i \in [k]} \lang[\aut; q_i](t_i) \, .
\end{equation*}
The \emph{language recognized by $\aut$} is the function $\lang[\aut] \colon \trees_\sig \to \sring$ such that $\lang[\aut](t) = \sum_{q \in Q_d} \lang[\aut; q](t)$ for each $t \in \trees_\sig$. Any weighted tree language recognized by a wta is called \emph{recognizable}, and we denote the set of all recognizable weighted tree languages (i) over $\sig$ and $\sring$ by $\Rec{\sig, \sring}$; (ii) over (some $\sig$ and) $\sring$ by $\Rec{\sring}$; and (iii) over some $\sig$ and some $\sring$ by $\Rec{\_}$. The latter notation is used to avoid confusion with the \emph{unweighted} recognizable tree languages ($\Rec{\mathbb{B}}$ in our notation).

\section{Definition of weighted alternating tree automata}
\label{sec:defs}

In this section, we define our weighted tree automata extended with the power of alternation. We give two definitions and show that they give rise to the same class of weighted tree languages. One definition follows the approach of~\cite{slutzki1985alternating-tre}, by partitioning the state set into two, in correspondence to existential and universal states. The other more closely aligns with the type of alternation originally introduced to finite automata by Chandra, Kozen, and Stockmeyer~\cite{chandra1981alternation}, in which transitions may refer to custom polynomials explicitly, thereby essentially allowing states to harness the power of both types of states simultaneously. We start by defining the latter version, as it will be our primary model.

\begin{definition}
\label{def:wata}
    A \emph{weighted alternating tree automaton (wata, or $(\sig, \sring)$-wata)} is a tuple $(\sig, Q, \sring, \pi, \delta)$, where $\sig$ is a ranked alphabet (of input symbols), $Q$ is a finite non-empty set (of states) with $Q \cap \sig = \emptyset$, $\sring$ is a semiring, $\pi \in \polynom{\sring}{\X_Q}$ is an \emph{initial polynomial}, and $\delta \colon Q \times \sig \to \polynom{\sring}{\X_{Q \times \nat}}$ is a \emph{transition mapping} such that, for each $k \geq 0$, $\sigma \in \sig^{(k)}$, and $q \in Q$, we have $\delta(q, \sigma) \in \polynom{\sring}{\X_{Q \times [k]}}$.
\end{definition}
The semantics of a wata $\aut = (\sig, Q, \sring, \pi, \delta)$ is the weighted tree language $\lang[\aut]\colon \trees_\sig \to \sring$ recognized by $\aut$, which we define as follows. First, for any state $q$ in $Q$ and tree $t = \sigma[t_1, \dots, t_k]$ in $\trees_\sig$, we let
\begin{equation} \label{eq:wata-sem}
    \lang[\aut;q](t) = \subst{\delta(q,\sigma)}{x_{(q',i)} \leftarrow \lang[\aut;q'](t_i) \mid q' \in Q, i \in [k]} \,.
\end{equation}
We then define, for any $t \in \trees_\sig$, $\lang[\aut](t) = \subst{\pi}{x_{q} \leftarrow \lang[{\aut;q}](t) \mid q \in Q}$.
We define the classes $\RecA{\sig, \sring}$, $\RecA{\sring}$, and $\RecA{\_}$ analogously to $\Rec{\sig, \sring}$, $\Rec{\sring}$, and $\Rec{\_}$.

The following example is a typical one highlighting the limitations of wta, and gives us an initial view of the power added by alternation.
\begin{example}
\label{ex:simple}
    Let $\sig_\mathrm{m} = \SET{\gamma^{(1)}, \alpha^{(0)}}$. Now consider the natural semiring $\nat$ and the weighted tree language $\lang\colon \trees_{\sig_\mathrm{m}} \to \nat$ defined for each $n \geq 0$ by $\lang(\gamma^n \alpha) = 2^{2^n}$. This language is not recognizable by any $(\sig, \sring)$-wta, since any function recognized by such a wta will be bounded from above by $b^n$ for some $b \in \nat$. However, letting $\aut_\mathrm{m} = (\sig_\mathrm{m}, \SET{q}, \nat, x_q, \delta)$ be a wata, in which $\delta(q,\gamma) = x_{(q,1)}^2$ and $\delta(q, \alpha) = 2$, we can compute $\lang[\aut_\mathrm{m}](\gamma^n \alpha)$ as a series of polynomial substitutions resulting in the expression $2^{2^n}$, i.e., $\lang[\aut_\mathrm{m}] = \lang$.
\end{example}

\begin{remark}
    There are at least two things about \examplename~\ref{ex:simple} worthy of note. The first being that over string alphabets, such as $\sig_\mathrm{m}$, wata reduce to the alternating weighted automata (over strings) of~\cite{kostolanyi2018alternating-wei}. The second being that the wata $\aut_\mathrm{m}$ is \emph{universal}. Although this property will not be examined in its own right in this paper, wata admit a rather natural notion of universality, namely when their transition mappings map only to monomials, leaving the addition operator of the semiring unused.
\end{remark}

\subsection{Normal forms}

We now introduce a couple of normal forms of wata, that will provide helpful assumptions in proofs. We say that a wata with state set $Q$ is \emph{initial-polynomial normalized} if its initial polynomial is of the form $\sum_{q \in Q'} x_{q}$ for some $Q' \subseteq Q$.

\begin{lemma}
\label{lem:normalized}
    For every $(\sig, \sring)$-wata $\aut$, there is an initial-polynomial normalized $(\sig, \sring)$-wata $\aut'$ with $\lang[\aut'] = \lang[\aut]$.
\end{lemma}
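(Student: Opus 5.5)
The plan is to add one fresh state $q_0 \notin Q$ that computes the value of the initial polynomial directly, and then make $x_{q_0}$ the new initial polynomial. Let $\aut = (\sig, Q, \sring, \pi, \delta)$. We set $Q' = Q \cup \SET{q_0}$ and keep $\delta'(q,\sigma) = \delta(q,\sigma)$ for all $q \in Q$. The new initial polynomial is $\pi' = x_{q_0}$, which has the required form $\sum_{q \in \SET{q_0}} x_q$.

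The transitions of $q_0$ must make $\lang[\aut';q_0](t)$ equal to $\subst{\pi}{x_q \leftarrow \lang[\aut;q](t) \mid q \in Q}$. Since $\lang[\aut;q](t)$ is itself the polynomial $\delta(q,\sigma)$ evaluated on the children's values, we compose the two polynomials symbolically. For each $\sigma \in \sig^{(k)}$ we define
\begin{equation*}
  \delta'(q_0,\sigma) = \subst{\pi}{x_q \leftarrow \delta(q,\sigma) \mid q \in Q} \in \polynom{\sring}{\X_{Q \times [k]}} \,.
\end{equation*}
This is a legal transition, because $\polynom{\sring}{\X_{Q\times[k]}}$ is closed under substitution, addition and multiplication. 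No variable refers to $q_0$, so no recursion through $q_0$ arises.

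Correctness is proved in two steps.
\begin{enumerate}
  \item We show $\lang[\aut';q] = \lang[\aut;q]$ for all $q \in Q$ by structural induction on $t$. This is immediate, since the transitions of $q$ are unchanged and never mention $q_0$.
  \item For $t = \sigma[t_1,\dots,t_k]$ we compute
  \begin{align*}
    \lang[\aut'](t) &= \lang[\aut';q_0](t) \\
    &= \subst{\subst{\pi}{x_q \leftarrow \delta(q,\sigma) \mid q\in Q}}{x_{(q',i)} \leftarrow \lang[\aut;q'](t_i) \mid q' \in Q, i \in [k]} \\
    &= \subst{\pi}{x_q \leftarrow \subst{\delta(q,\sigma)}{x_{(q',i)} \leftarrow \lang[\aut;q'](t_i)} \mid q \in Q} \\
    &= \lang[\aut](t) \,.
  \end{align*}
\end{enumerate}

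The only point needing care is the third equality, which commutes the two substitutions. It holds because substituting semiring constants for variables is a semiring homomorphism $\polynom{\sring}{\X_{Q\times[k]}} \to \sring$, and evaluation of polynomials is compatible with composition. This is where commutativity of $\sring$ is used, to make evaluation multiplicative. One can argue it monomial by monomial, using distributivity and commutativity. This is the main, though routine, obstacle.
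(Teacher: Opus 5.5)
Your proof is correct and uses the same idea as the paper: you postpone the evaluation of $\pi$ to the root by giving a fresh state the composite transition polynomial $\subst{\pi}{x_q \leftarrow \delta(q,\sigma) \mid q \in Q}$. The paper spreads this over fresh states $p_{(m,\sigma)}$, one per monomial $m$ of $\pi$ and symbol $\sigma$ (with transition $0$ on other symbols), whereas your single state $q_0$ does the same job more economically, and you also spell out the commutation of substitutions that the paper leaves as straightforward.
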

\begin{proof}
    Let $\aut = (\sig, Q, \sring, \pi, \delta)$ be a wata and let $m_1, \dots, m_\ell$ be distinct monomials, the sum of which make up the initial polynomial $\pi$. The main idea behind the construction is to postpone the computation of the $m_i$ to the following computation step. Formally, we proceed with the construction of $\aut' = (\sig, Q', \sring, \pi', \delta')$ as follows. First, let $P = \SETC{p_{(m_i, \sigma)}}{i \in [\ell], \sigma \in \sig}$ (which we assume is disjoint from $Q$) and set $Q' = Q \cup P$. Next, we set $\pi' = \sum_{p \in P} x_p$ and define $\delta'$ by
    \begin{equation*}
        \delta'(q,\sigma) =
        \begin{cases}
            \delta(q, \sigma) & \text{if } q \in Q \\
            \subst{m}{x_{q'} \leftarrow \delta(q', \sigma) \mid q' \in Q} & \text{if } q = p_{(m, \sigma)} \\
            0 & \text{otherwise.}
        \end{cases}
    \end{equation*}
    Obviously, $\aut'$ is initial-polynomial normalized. The proof of $\lang[\aut'] = \lang[\aut]$ is straightforward.
\end{proof}

The next property concerns the degree of the polynomials appearing in a wata.

\begin{lemma}
\label{lem:degree1}
    For every $(\sig, \sring)$-wata $\aut$, there is a wata $\aut' = (\sig, Q', \sring, \pi, \delta')$ with $\lang[\aut'] = \lang[\aut]$ and $\delta'(q,\sigma)$ having degree at most $1$ for all $q \in Q'$ and $\sigma \in \sig$.
\end{lemma}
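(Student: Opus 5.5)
The idea is to replace every power $x_{(q,i)}^d$ with $d \geq 2$ by a product of $d$ distinct variables, each referring to a fresh copy of $q$ that computes exactly the same weighted tree language. Note that the statement keeps the initial polynomial $\pi$ unchanged, so the states of $Q$ must survive in $Q'$ with their original meaning, and the degree bound only concerns the transition polynomials.

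\textbf{Construction.} Let $\aut = (\sig, Q, \sring, \pi, \delta)$. Let $D$ be the maximal degree of any polynomial $\delta(q,\sigma)$, with $q \in Q$ and $\sigma \in \sig$; it exists because $Q$ and $\sig$ are finite. Put $Q' = Q \times [D]$ and identify each $q \in Q$ with $(q,1)$, so that $\pi$ may be read as a polynomial in $\polynom{\sring}{\X_{Q'}}$. For every $q \in Q$, $\sigma \in \sig^{(k)}$ and $j \in [D]$, define $\delta'((q,j),\sigma)$ from $\delta(q,\sigma)$ monomial by monomial. A factor $x_{(p,i)}^{d}$ with $d \leq D$ is replaced by
\[
\prod_{j' \in [d]} x_{((p,j'),i)} .
\]
Distinct pairs $(p,i)$ give distinct variables, so each resulting monomial is multilinear, and $\delta'((q,j),\sigma)$ therefore has degree at most $1$. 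It lies in $\polynom{\sring}{\X_{Q' \times [k]}}$, as required.

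\textbf{Correctness.} The claim is that $\lang[\aut';(q,j)](t) = \lang[\aut;q](t)$ for all $q \in Q$, $j \in [D]$ and $t \in \trees_\sig$. We prove it by induction on $t = \sigma[t_1,\dots,t_k]$. By the induction hypothesis, substituting $\lang[\aut';(p,j')](t_i)$ for $x_{((p,j'),i)}$ is the same as substituting $\lang[\aut;p](t_i)$ for each of them. Since $\sring$ is commutative, the product $\prod_{j' \in [d]} x_{((p,j'),i)}$ then evaluates to $\lang[\aut;p](t_i)^d$, which is exactly the value of $x_{(p,i)}^d$ under the substitution in \eqref{eq:wata-sem}. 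Monomials and sums are left otherwise unchanged, so $\delta'((q,j),\sigma)$ evaluates to the same element as $\delta(q,\sigma)$. This gives the claim, and with $j = 1$ and the identification above, $\lang[\aut'] = \lang[\aut]$ follows directly from the definition.

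\textbf{Main obstacle.} There is no real difficulty; the one point needing care is bookkeeping. The substitution equality must hold at the level of evaluated semiring values, not as polynomials, since it relies on the inductive fact that all copies $(p,j')$ agree. We also have to make sure that $\pi$ is reused verbatim, which is why the $Q$-states are embedded into $Q'$ as the copies $(q,1)$.
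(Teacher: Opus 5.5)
Your proposal is correct and is essentially the paper's construction: $D$ copies of each state, each power $x_{(p,i)}^d$ replaced by the product of the variables for copies $1,\dots,d$ of $p$, and $\pi$ left unchanged. You additionally make explicit the identification $q \mapsto (q,1)$ and the inductive verification that the paper leaves as ``clear''. The only blemish, shared with the paper, is the degenerate case where every $\delta(q,\sigma)$ is constant: then $D = 0$ and $Q \times [D]$ is empty, which taking $D \geq 1$ fixes.
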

\begin{proof}
    Let $\aut = (\sig, Q, \sring, \pi, \delta)$ be a wata and let $D$ be the maximal degree of any polynomial in the image of $\delta$. The idea is simply to make $D$ copies of each state in $Q$, and to modify $\delta$ to utilize these copies instead of exponents. We define $\aut' = (\sig, Q', \sring, \pi, \delta')$ by first letting $Q' = \SETC{\langle q, d \rangle}{q \in Q, d \in [D]}$; then, for every $k \geq 0$, $\sigma \in \sig^{(k)}$, $q \in Q$, and $d \in [D]$, we let
    \begin{equation*}
        \delta'(\langle q, d \rangle, \sigma) = \sum_{i \in [\ell]} c_i \cdot \prod_{(q',j) \in Q \times [k]} x_{( \langle q', 1 \rangle, j)} \cdots x_{( \langle q', d_{(i,q',j)} \rangle, j)} \,,
    \end{equation*}
    where $\delta(q, \sigma) = \sum\limits_{i \in [\ell]} c_i \cdot \prod\limits_{(q',j) \in Q \times [k]} x_{(q', j)}^{d_{(i,q',j)}}$.
    It should be clear that $\lang[\aut'] = \lang[\aut]$.
\end{proof}

Note that the construction in \lemmaname~\ref{lem:degree1} does not affect the initial polynomial, i.e., the subject of \lemmaname~\ref{lem:normalized}. Hence we can assume that any wata can be transformed into an equivalent wata possessing both of these properties; such wata will be referred to as \emph{normalized}.

\subsection{Alternative definition}
\label{sec:two-mode}

\definitionname~\ref{def:wata} deviates from the approach to alternation taken in~\cite{slutzki1985alternating-tre} by not admitting a partitioning of the state set into existential and universal states. The benefit of this is a formalism that is often more elegant and convenient to work with, although the ties to this other variety of alternation may not be directly apparent. To this end, we introduce two-mode wata,\footnote{Adhering to the nomenclature of \cite{kostolanyi2018alternating-wei}.} a model expressively equivalent to wata, establishing this connection. In the weighted setting, existentiality is extended to summation, and universality to product.

\begin{definition}
    A \emph{two-mode weighted alternating tree automaton (two-mode wata, or two-mode $(\sig, \sring)$-wata)} is a tuple $(\sig, Q_\oplus, Q_\otimes, R, \sring, \iota, \wt)$, where $\sig$ is a ranked alphabet, $Q_\oplus$ and $Q_\otimes$ are finite sets (of sum and product states, respectively) with $Q_\oplus \cap Q_\otimes = \sig \cap (Q_\oplus \cup Q_\otimes) = \emptyset$, $R$ is a ranked alphabet (of transition rules) such that, for each $k \geq 0$, $R^{(k)} \subseteq (Q_\oplus \cup Q_\otimes) \times \sig^{(k)} \times (Q_\oplus \cup Q_\otimes)^k$, $\sring$ is a semiring, $\iota\colon (Q_\oplus \cup Q_\otimes) \to \sring$ is an initial weight mapping, and $\wt\colon R \to \sring$ is a rule weight mapping.
\end{definition}
A two-mode wata $\aut = (\sig, Q_\oplus, Q_\otimes, R, \sring, \iota, \wt)$ recognizes the weighted tree language $\lang[\aut]\colon \trees_\sig \to \sring$ defined as follows. For any $q\in Q_\oplus \cup Q_\otimes$ and $\sigma \in \sig$, let $R_{q,\sigma}$ denote the set of all rules in $R$ of the form $(q,\sigma, w)$. Now, for any state $q \in Q_\oplus \cup Q_\otimes$ and tree $t = \sigma[t_1, \dots, t_k]$ in $\trees_\sig$, we let
\begin{equation} \label{eq:2mode-sem}
    \lang[\aut;q](t) =
    \begin{dcases}
        \sum\limits_{\rho = (q,\sigma,q_1\dots q_k) \in R} \wt(\rho) \cdot \prod_{i \in [k]} \lang[\aut;q_i](t_i) &\text{if }  q \in Q_\oplus \\
        \prod\limits_{\rho = (q,\sigma,q_1\dots q_k) \in R} \wt(\rho) \cdot \prod_{i \in [k]} \lang[\aut;q_i](t_i) &\text{if } q \in Q_\otimes \text{ and } R_{q,\sigma} \neq \emptyset \\
        0 &\text{otherwise.}
    \end{dcases}
\end{equation}
We then define, for any $t \in \trees_\sig$, $\lang[\aut](t) = \sum_{q \in (Q_\oplus \cup Q_\otimes)} \iota(q) \cdot \lang[{\aut;q}](t)$.

The difference going from wata to two-mode wata can be viewed as externalizing some of the semantics; whereas an entire computation is essentially built into the transitions of wata, only the \emph{mode} of the computation is indicated in the transitions of two-mode wata. The next result shows that this difference does not alter the expressive power of the automata.

\begin{theorem}
\label{thm:diff-sem-same-class}
    For any two-mode $(\sig, \sring)$-wata $\aut$, there is a $(\sig, \sring)$-wata $\aut'$ with $\lang[\aut'] = \lang[\aut]$, and vice versa.
\end{theorem}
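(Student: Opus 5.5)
For the direction from two-mode wata to wata, I will keep the state set $Q = Q_\oplus \cup Q_\otimes$ and encode each state's mode directly into its transition polynomial. For $q \in Q$, $\sigma \in \sig^{(k)}$ and a rule $\rho = (q,\sigma,q_1\dots q_k)$, write $m_\rho = \wt(\rho)\cdot \prod_{i\in[k]} x_{(q_i,i)}$. The transition mapping is then defined by cases:
\begin{itemize}
\item if $q \in Q_\oplus$, set $\delta(q,\sigma) = \sum_{\rho \in R_{q,\sigma}} m_\rho$;
\item if $q \in Q_\otimes$ and $R_{q,\sigma} \neq \emptyset$, set $\delta(q,\sigma) = \prod_{\rho\in R_{q,\sigma}} m_\rho$, which is again a polynomial after multiplying out;
\item otherwise, set $\delta(q,\sigma) = 0$.
\end{itemize}
Take the initial polynomial $\pi = \sum_{q\in Q}\iota(q)\cdot x_q$. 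A structural induction on $t$ shows $\lang[\aut';q](t) = \lang[\aut;q](t)$ for every $q$. The inductive step is just the observation that substituting values into a sum or product of the $m_\rho$ evaluates it termwise; this uses that substitution commutes with $+$ and $\cdot$ in $\polynom{\sring}{\X_{Q\times\nat}}$. The claim for $\pi$ then follows directly.

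For the converse, I will first apply Lemma~\ref{lem:normalized} and Lemma~\ref{lem:degree1} to obtain a normalized wata $\aut = (\sig,Q,\sring,\pi,\delta)$. Here $\pi = \sum_{q\in Q'} x_q$, and each $\delta(q,\sigma)$ is a sum $\sum_{j\in[\ell]} c_j \prod_{(q',i)\in M_j} x_{(q',i)}$ with every variable of degree at most one. A monomial may still contain several variables referring to the same child $i$, or none at all. The plan is to introduce, for each $q$, $\sigma$ and monomial index $j$, a fresh product state $p_{(q,\sigma,j)} \in Q_\otimes$, and to keep every $q\in Q$ as a sum state. The sum state $q$ gets rules $(q,\sigma,p_{(q,\sigma,j)}^k)$ of weight $c_j$. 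These rules are wrong as stated, because a two-mode rule reads each child in exactly one state, while the product state receives the whole tree $\sigma[t_1,\dots,t_k]$ rather than its children.

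This is where I expect the main obstacle, and I will instead do the splitting one level down. A monomial $c_j\prod x_{(q',i)}$ factors as $c_j\cdot\prod_{i\in[k]} \mu_{j,i}$, where $\mu_{j,i}$ is the product of the variables at child $i$. So the sum state $q$ gets one rule $(q,\sigma,s_{j,1}\cdots s_{j,k})$ of weight $c_j$, where $s_{j,i}$ is a fresh product state that should compute $\prod_{q' : (q',i)\in M_j}\lang[\aut;q'](t_i)$ on the child $t_i$. The state $s_{j,i}$ is thus determined by a finite set $S\subseteq Q$, and I will write it as $s_S$. Its value on a tree $\tau=\tau'[u_1,\dots,u_r]$ must be $\prod_{q'\in S}\lang[\aut;q'](\tau)$. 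I will realise this as a product over rules $(s_S,\tau',q'\,q'\cdots q')$ of weight $1$, one rule per $q'\in S$. Here the single $q'$-state used at every child is a copy of $q'$ whose rules simulate $\delta(q',\tau')$.

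The remaining difficulty is a type mismatch: the rule must name children, but $q'$ acts on the node $\tau$ itself. To fix this, $s_S$ will apply $\delta(q',\tau')$ itself. For each $q'\in S$, it uses a rule $(s_S,\tau', \text{sum states for the sub-factors})$ targeting auxiliary sum states. Each auxiliary state $e_{q',\tau'}$ acts as $q'$ at $\tau$, and is implemented by a rank-$r$ rule whose children go to product states $s_{S'}$. Since a product state cannot itself sum, I will handle $S$ with $|S|=1$ by letting $s_{\{q'\}}$ be a sum state identified with $q'$. The empty case ($\mu_{j,i}=1$) needs a product state that has a rule with all-one weight whose children again use that state, giving value $1$. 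If the construction becomes too tangled, my fallback is to give each rule its own weight-$1$ "constant one" state for the absent children.

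Finally, for the initial weights I set $\iota(q)=1$ for $q\in Q'$ and $0$ otherwise. The correctness argument is a simultaneous induction on trees, proving $\lang[\aut';s_S](\tau)=\prod_{q'\in S}\lang[\aut;q'](\tau)$ for all finite $S$ that occur. Only finitely many $S$ occur, because the degree-one normalization bounds them by subsets of $Q$.
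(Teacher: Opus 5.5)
Your first direction (two-mode wata to wata) is the paper's construction and is correct. The converse has a genuine gap at the product states $s_S$ with $|S|\geq 2$, and your repair does not close it.

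In a two-mode wata, the value of a product state $p$ at $\tau=\tau'[u_1,\dots,u_r]$ is $\prod_{\rho=(p,\tau',p_1\cdots p_r)\in R}\wt(\rho)\cdot\prod_{i\in[r]}\lang[\aut';p_i](u_i)$. By commutativity this is a constant times $\prod_{i\in[r]}G_i(u_i)$, where each $G_i$ is a product of state values at the single child $u_i$. Each factor $\lang[\aut;q'](\tau)$, by contrast, is a \emph{sum} of such products over the monomials of $\delta(q',\tau')$; think of $\lang[\aut;a](u_1)\cdot\lang[\aut;b](u_2)+\lang[\aut;c](u_1)\cdot\lang[\aut;d](u_2)$. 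In general neither this nor $\prod_{q'\in S}\lang[\aut;q'](\tau)$ splits into per-child factors, so no rule set for $s_S$ at $\tau'$ gives the invariant you want to prove.

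The auxiliary states do not help. Any state named in a rule applied at $\tau$ is evaluated at a child $u_i$, never at $\tau$. So $e_{q',\tau'}$ cannot ``act as $q'$ at $\tau$'', and its rules would be applied to the label of $u_i$, not to $\tau'$. Turning $s_S$ into a sum state that multiplies out $\prod_{q'\in S}\delta(q',\tau')$ fails for another reason. The children then need products over \emph{multisets} of states whose multiplicities grow with depth. This already happens for the normalized automaton of Example~\ref{ex:simple}, where $S=\{\langle q,1\rangle,\langle q,2\rangle\}$ and multiplicities double at each level. Since $|S|\geq 2$ is exactly where alternation enters, handling the singleton case only recovers ordinary wta behaviour.

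The missing idea is the paper's guess-and-verify step, which resolves the sums belonging to a node one level \emph{above} it. A sum state at $\sigma[t_1,\dots,t_k]$ guesses the children's labels $\sigma_1,\dots,\sigma_k$. It then substitutes $\delta(q',\sigma_i)$ for each $x_{(q',i)}$ in $\delta(q,\sigma)$ and multiplies out. Each resulting monomial is a pure product of values at grandchildren. It yields one rule whose $i$-th target is a product state $p_{(m_i,\sigma_i)}$, where $m_i$ collects the factors below child $i$.

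That product state has no rules for labels other than $\sigma_i$, so wrong guesses contribute $0$. On $\sigma_i$ it has one weight-$1$ rule per variable occurrence in $m_i$. Each such rule sends a sum state to the corresponding grandchild and a neutral state $p_r$ (constant value $1$) everywhere else. Copies $\langle q,d\rangle$ of the sum states realise exponents $d>1$, since $R$ is a set and cannot contain a rule twice. Sum and product states thus alternate between levels, and a product state only ever computes a monomial over its own children, which the two-mode semantics permits.

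The child index must be kept during this expansion, which is exactly your per-child factorization. The paper writes $\varphi_{q,\sigma,\langle\sigma_1,\dots,\sigma_k\rangle}$ over $\X_{Q\times[\mxrk(\sig)]}$ and passes the same monomial to every child. That is only correct once this index is restored.
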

\begin{proof}
    Given a two-mode wata $\aut = (\sig, Q_\oplus, Q_\otimes, R, \sring, \iota, \wt)$ one can easily construct an ordinary wata $\aut' = (\sig, Q, \sring, \pi, \delta)$ with $\lang[\aut'] = \lang[\aut]$ by setting

    \begin{itemize}
        \item $Q = Q_\oplus \cup Q_\otimes$
        \item $\pi = \sum_{q \in Q} \iota(q) \cdot x_q$, and
    \end{itemize}
    \begin{equation*}
        \delta(q, \sigma) =
        \begin{dcases}
            \sum\limits_{\rho = (q, \sigma, q_{1}\dots q_{k}) \in R} \wt(\rho) \cdot \prod\limits_{i \in [k]} x_{(q_i, i)} & \text{if } q \in Q_\oplus \\
            \prod\limits_{\rho = (q, \sigma, q_{1}\dots q_{k}) \in R} \wt(\rho) \cdot \prod\limits_{i \in [k]} x_{(q_i, i)} & \text{if } q \in Q_\otimes \text{ and } R_{q,\sigma} \neq \emptyset \\
            0 & \text{otherwise.}
        \end{dcases}
    \end{equation*}
    In light of Equations \eqref{eq:wata-sem} and \eqref{eq:2mode-sem}, it should be clear that $\lang[\aut'] = \lang[\aut]$.

    The other direction is a bit more involved. We employ a guess-and-verify strategy in which the levels of a tree are forced to essentially alternate between sum and product states; in sum states, the automaton pre-computes the cost of processing all possible direct descendants, whereas it in product states filters correct computations from incorrect ones and distributes the subsequent computations.
    
    Given a wata $\aut' = (\sig, Q, \sring, \pi, \delta)$ --- assumed to be normalized, by \lemmaname s~\ref{lem:normalized} and \ref{lem:degree1} --- we construct the two-mode wata $\aut = (\sig, Q_\oplus, Q_\otimes, R, \sring, \iota, \wt)$ as follows.

    First, let $M$ be the (finite) set of all monomials in $\polynom{\sring}{\X_{Q \times [\mxrk(\sig)]}}$ with coefficient $1$ and of degree at most $ D = |Q| \cdot \mxrk(\sig)$ --- this is the highest possible degree of any polynomial resulting from a substitution over polynomials in $\Ima(\delta)$. Note that the only constant monomial in $M$ is the monomial $1$. Let $Q_\oplus = Q \times [D]$ and $Q_\otimes = \SET{p_r} \cup \SETC{p_{(m, \sigma)}}{m \in M, \sigma \in \sig}$, where $p_r$ is distinct from all other states in $Q_\oplus \cup Q_\otimes$. Assuming $\pi = \sum_{q \in Q'} x_q$ for some $Q' \subseteq Q$, we then define $\iota(\langle q, 1 \rangle) = 1$ if $q \in Q'$, and $\iota(q') = 0$ for all other $q' \in Q_\oplus \cup Q_\otimes$.
    
    Lastly, we define $R$ and $\wt$; as for wta, we write $l \xrightarrow{c} r$ to denote rules together with their weight.
    Starting with rules originating from sum states, we introduce --- for each $k \geq 0$, $\sigma \in \sig^{(k)}$, $q \in Q$, and $\sigma_1, \dots, \sigma_k \in \sig$ --- the notation $\varphi_{q, \sigma, \langle \sigma_1, \dots, \sigma_k \rangle} = \subst{\delta(q,\sigma)}{x_{(q',i)} \leftarrow \delta(q', \sigma_i) \mid q' \in Q, i \in [k]}$.
    Intuitively, $\varphi_{q, \sigma, \langle \sigma_1, \dots, \sigma_k \rangle}$ represents the cost of, from $q$, processing both the current and the next level of the tree, \emph{provided} the children nodes are indeed labeled by $\sigma_1, \dots, \sigma_k$. We may assume that each non-zero $\varphi_{q, \sigma, \langle \sigma_1, \dots, \sigma_k \rangle}$ is of the form $c_1 \cdot m_1 + \dots + c_\ell \cdot m_\ell$, where $c_1, \dots, c_\ell \in \sring \setminus \SET{0}$, $m_1, \dots, m_\ell \in M$, and $m_i \neq m_j$ whenever $i \neq j$; for constant polynomials, we have $\ell = 1$ and $m_1 = 1$. For each non-zero $\varphi_{q, \sigma, \langle \sigma_1, \dots, \sigma_k \rangle} = c_1 \cdot m_1 + \dots + c_\ell \cdot m_\ell$, each $i \in [\ell]$, and each $d \in [D]$, we let $R$ contain the rule
    \begin{equation*}
        \langle q, d \rangle [\sigma] \xrightarrow{c_i} \langle p_{(m_i, \sigma_1)}, \dots, p_{(m_i, \sigma_k)} \rangle \,.
    \end{equation*}

    Moving on to rules originating from product states, we note that the inclusion of the state $p_r$ in $Q_\otimes$ is a technicality, and its purpose is simply to let trees pass through without affecting their weights; to this end, we let $R$ contain the rule
        $p_r[\sigma] \xrightarrow{1} \langle \underbrace{p_r, \dots, p_r}_{k} \rangle$
    for each $k \geq 0$ and $\sigma \in \sig^{(k)}$. The other product states are then the ones in charge of verifying the computations from the parent node and distributing the computation appropriately: for any $k \geq 0$, ${\sigma \in \sig^{(k)}}$, and $m \in M$, we define $R_{p_{(m, \sigma)}; \sigma}$ as follows. In the case where $k = 0$ or $m = 1$, we let $R_{p_{(m, \sigma)}; \sigma}$ consist of the single rule $p_{(m, \sigma)}[\sigma] \xrightarrow{1} \langle \underbrace{p_r, \dots, p_r}_{k} \rangle$. Otherwise, we assume $m = \prod_{(q', i) \in Q \times [k]}\limits x_{(q',i)}^{d_{(q',i)}}$ and let $R_{p_{(m, \sigma)}; \sigma}$ contain, for each $q \in Q$, $i \in [k]$, and $d \in [d_{(q,i)}]$, the rule $p_{(m, \sigma)}[\sigma] \xrightarrow{1} \langle \underbrace{p_r, \dots, p_r}_{i-1}, \langle q, d\rangle, \underbrace{p_r, \dots, p_r}_{k-i} \rangle$.
    This concludes the definition of $R$ and $\wt$, and thereby $\aut$. Note in particular that $R_{p_{(m,\sigma)}; \sigma'} = \emptyset$ whenever $\sigma \neq \sigma'$.

    To show $\lang[\aut] = \lang[\aut']$, one can prove the following two claims via simultaneous structural induction over an input tree $t = \sigma[t_1, \dots, t_k] \in \trees_\sig$:
    \begin{enumerate}[label=(\roman*)]
        \item For all $m \in M$ and $\sigma' \in \sig$,
        \begin{equation*}
            \lang[\aut; p_{(m, \sigma')}](t) =
            \begin{cases}
                \subst{m}{x_{(q, i)} \leftarrow \lang[\aut'; q] (t_i) \mid q \in Q, i \in [k]} &\text{if } \sigma' = \sigma \text{ and } m \in \polynom{\sring}{\X_{Q \times [k]}} \\
                0 &\text{otherwise.}
            \end{cases}
        \end{equation*}
        \item\label{it:2mode-claim2} For all $q \in Q$ and $d \in [D]$, $\lang[\aut; \langle q,d \rangle] (t) = \lang[\aut'; q] (t)$.
    \end{enumerate}
    The result then follows from \claimname~\ref{it:2mode-claim2} and the definition of $\iota$. We omit the explicit inductive proof due to space constraints.
    \end{proof}

\section{Expressive power}
\label{sec:power}

It is easy to see that ordinary weighted tree automata are special cases of wata; a wta is simply a (normalized) wata for which all transition polynomials $\delta(q, \sigma)$, for $\sigma \in \sig^{(k)}$, are sums of monomials of the form $c \cdot x_{(q_1, 1)} \cdots x_{(q_k, k)}$. We know already from \examplename~\ref{ex:simple} that this restriction is not just syntactical; there are semirings over which wata are strictly more powerful than wta. On the other hand, we know from the unweighted case~\cite{slutzki1985alternating-tre} that there are semirings (i.e., the boolean semiring) over which wata and wta are equivalent in terms of expressive power. The latter semirings are below shown to be exactly the locally finite ones.

\newcommand{\subsring}{\mathbb{K}_\mathrm{sub}}

\begin{theorem}
\label{thm:eq-iff-loc-fin}
    $\Rec{\sring} = \RecA{\sring}$ if and only if $\sring$ is locally finite.
\end{theorem}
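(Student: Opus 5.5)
We prove the two directions separately. The inclusion $\Rec{\sring} \subseteq \RecA{\sring}$ holds for every commutative semiring, as noted at the start of this section: a wta is a wata whose transition polynomials are sums of monomials $c \cdot x_{(q_1,1)}\cdots x_{(q_k,k)}$, with initial polynomial $\sum_{q \in Q_d} x_q$. So the statement reduces to showing that $\RecA{\sring} \subseteq \Rec{\sring}$ holds if and only if $\sring$ is locally finite.

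\emph{If $\sring$ is locally finite.} Let $\aut = (\sig, Q, \sring, \pi, \delta)$ be a wata, and let $A \subseteq \sring$ be the finite set of all coefficients occurring in $\pi$ and in the polynomials $\delta(q,\sigma)$.
\begin{enumerate}[label=(\arabic*)]
    \item By induction on \eqref{eq:wata-sem}, every value $\lang[\aut;q](t)$ lies in $\langle A \rangle$, since substitution into a polynomial only uses $+$, $\cdot$ and coefficients from $A$. By local finiteness, $F = \langle A \rangle$ is finite.
    \item We build a \emph{deterministic} bottom-up wta whose states are the vectors $v \in F^Q$. For each $\sigma \in \sig^{(k)}$ and each $v_1,\dots,v_k \in F^Q$, let $v$ be the vector with $v(q) = \subst{\delta(q,\sigma)}{x_{(q',i)} \leftarrow v_i(q') \mid q' \in Q, i \in [k]}$.
    \item For this $v$ we add the rule $v[\sigma] \xrightarrow{1} \langle v_1,\dots,v_k \rangle$. (In the paper's top-down notation the rule has source state $v$ and children $v_1,\dots,v_k$, read bottom-up.)
    \item By induction, $\lang[\aut';v](t) = 1$ exactly when $v = (\lang[\aut;q](t))_{q\in Q}$, and $0$ otherwise. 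Determinism ensures that each sum in the wta semantics has at most one nonzero term.
    \item The initial polynomial cannot be used directly, because a wta only sums over designated states. We fix this by attaching the final weight $\subst{\pi}{x_q \leftarrow v(q) \mid q \in Q}$ to state $v$. Concretely, we add a designated copy $\hat v$ of $v$ whose rules duplicate those of $v$, with weight $\subst{\pi}{x_q \leftarrow v(q)}$ instead of $1$. This gives $\lang[\aut'] = \lang[\aut]$.
\end{enumerate}
This direction is routine.

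\emph{If $\sring$ is not locally finite.} There is a finite $A \subseteq \sring$ with $\langle A \rangle$ infinite. The plan is to reuse the string result of Kostol\'anyi and Mi\v{s}\'un~\cite{kostolanyi2018alternating-wei}: over a non-locally-finite commutative semiring, alternating weighted automata are strictly more expressive than weighted automata. By the remark after \examplename~\ref{ex:simple}, wata over a string alphabet coincide with their alternating weighted automata, and wta over a string alphabet coincide with ordinary weighted string automata (up to reversal, which preserves recognizability). So the string separation transfers directly, and some $\lang \in \RecA{\sring} \setminus \Rec{\sring}$ exists.

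\emph{Fallback if the cited result is unavailable in this form.} We would argue directly, in the style of \examplename~\ref{ex:simple}:
\begin{itemize}
    \item Take a monadic alphabet and a wata whose states $q_a$ for $a \in A$ produce, through squaring and sum transitions, values that enumerate ever-larger elements of $\langle A \rangle$. For instance, use a universal state for products and an existential sum to combine them.
    \item Show that any wta over a monadic alphabet has values in the semimodule generated by the finitely many entries of $\mu(\gamma)^n$. This requires a finiteness argument: the values lie in a finitely generated subsemiring, while the values of the wata grow "doubly exponentially" in an algebraic sense.
\end{itemize}
This step is the main obstacle, because in a general non-locally-finite semiring there is no notion of growth. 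So I would rely on the cited string theorem, which handles exactly this, and only check carefully that the embedding of strings as monadic trees preserves both classes.
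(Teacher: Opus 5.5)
Your proof is correct. The ``only if'' direction matches the paper's: it too just invokes Kostol\'anyi and Mi\v{s}\'un's Theorem~7.1 for string alphabets, so your fallback sketch is unnecessary. The ``if'' direction, however, takes a genuinely different route.

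The paper first normalizes the initial polynomial (Lemma~\ref{lem:normalized}) and passes to the finite subsemiring generated by the coefficients of $\Ima(\delta)$. It then builds a nondeterministic wta whose states are classes $[m]_\sim$ of coefficient-$1$ monomials over $\X_Q$, taken modulo equality as polynomial functions. The rules are read off by expanding $\subst{m}{x_q \leftarrow \delta(q,\sigma) \mid q \in Q}$ and splitting each resulting monomial by child position. The invariant is $\lang[{\aut';[m]_\sim}](t) = \subst{m}{x_q \leftarrow \lang[\aut;q](t) \mid q \in Q}$.

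You instead build a bottom-up deterministic wta on the value vectors $F^Q$, with all rule weights $1$, and absorb $\pi$ into weighted designated copies. Both constructions rest on the same fact: all values lie in the finite subsemiring generated by the coefficients.

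\begin{itemize}
\item \emph{What your version buys.} It is easier to verify, since the invariant is the standard characteristic-function induction for deterministic automata. It needs no normalization, no canonical representatives, and not the quotient $\sim$ that the paper uses to make the set of monomial states finite. It also shows that $\lang[\aut]$ takes finitely many values, each on a recognizable tree language.
\item \emph{What the paper's version buys.} It is the weighted counterpart of the classical removal of universal branching: states are products of original states, and rule weights are coefficients of the expanded transitions. It therefore stays structurally close to the given wata, whereas yours always ranges over all $|F|^{|Q|}$ vectors and defers all weight to the root.
\end{itemize}
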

\begin{proof}
    The ``only if'' direction is a direct consequence of~\cite[Thm~$7.1$]{kostolanyi2018alternating-wei}, in which it is shown that $\RecA{\sig, \sring} = \Rec{\sig, \sring}$ if and only if $\sring$ is locally finite, \emph{for string alphabets} $\sig$. Thus, in particular, $\Rec{\sring} \subsetneq \RecA{\sring}$ whenever $\sring$ is \emph{not} locally finite. What remains to be shown is that $\RecA{\sring} \subseteq \Rec{\sring}$ whenever $\sring$ is locally finite.

    Let $\sring$ be locally finite and $\aut = (\sig, Q, \sring, \pi, \delta)$ be a wata, which we by \lemmaname~\ref{lem:normalized} assume to be initial-polynomial normalized. First off, let $\sringset'$ be the (finite) set of all coefficients appearing in polynomials in $\Ima(\delta)$, and let $\subsring$ be the semiring $\langle \sringset' \rangle$. Since $\sring$ is locally finite, $\subsring$ is finite, and $\aut$ can be viewed as a $(\sig, \subsring)$-wata. Since $\subsring$ is finite, there are finitely many functions $\subsring^{n} \to \subsring$ for any given $n \in \nat$; in particular, there are finitely many functions defined by polynomials in $\polynom{\subsring}{\X_Q}$. For any two polynomials $p_1$ and $p_2$ in $\polynom{\subsring}{\X_Q}$, we let $p_1 \sim p_2$ if and only if $p_1$ and $p_2$ define the same polynomial function, and let $M$ denote the set of all equivalence classes $[m]_\sim$, where $m$ is a monomial in $\polynom{\subsring}{\X_Q}$ with coefficient $1$. In what follows, we assume that each equivalence class in $M$ has some canonical monomial $m$ and consequently denote the class by $[m]_\sim$.
    
    We will now construct a wta $\aut' = (\sig, M, M_d, R, \subsring, \wt)$ such that $\lang[\aut'] = \lang[\aut]$. First, assuming $\pi = x_{q_1} + \cdots + x_{q_n}$, we let $M_d = \SET{[x_{q_1}]_\sim, \dots, [x_{q_n}]_\sim}$. Next, we define $R$, and we do this by, for each $k \geq 0$, $\sigma \in \sig^{(k)}$, and $[m]_\sim \in M$, defining each $R_{[m]_\sim ; \sigma}$ individually:
    
    The polynomial $\subst{m}{x_q \leftarrow \delta(q, \sigma) \mid q \in Q}$ can be assumed to be of the form $\sum_{j \in [\ell]} c_j \cdot m_{j_1} \cdots m_{j_k}$, where $c_j$ is in $\subsring$, each $m_{j_i}$ is a monomial in $\polynom{\subsring}{\X_{Q \times \SET{i}}}$ with coefficient $1$, and $m_{j_1} \cdots m_{j_k} \neq m_{j'_1} \cdots m_{j'_k}$ for $j \neq j'$. For each $j \in [\ell]$ and $i \in [k]$, we let $m'_{j_i} = \subst{m_{j_i}}{x_{(q,i)} \leftarrow x_q \mid q \in Q}$. For each $j \in [\ell]$, we then let $R_{[m]_\sim ; \sigma}$ contain the rule $\rho_j = ([m]_\sim, \sigma, \langle [m'_{j_1}]_\sim, \dots, [m'_{j_k}]_\sim \rangle)$ and set $\wt(\rho_j) = c_j$.

    One can now prove, for each $t \in \trees_\sig$ and $m \in M$, that $\lang[{\aut';[m]_\sim}] (t) = \subst{m}{x_q \leftarrow \lang[\aut;q] (t) \mid q \in Q}$ via structural induction on $t$. The result will then follow, since
    \begin{equation*}
        \lang[\aut'](t) = \sum_{[x_q]_\sim \in M_d} \lang[{\aut'; [x_q]_\sim }] (t) = \sum_{[x_q]_\sim \in M_d} \lang[\aut; q] (t) =  \subst{\pi}{x_q \leftarrow \lang[\aut; q] (t) \mid q \in Q} = \lang[\aut](t) \,.
    \end{equation*}
    We omit this part of the proof because of a lack of space.
\end{proof}

Another related automaton model appearing in the literature is the \emph{polynomially weighted tree automata (pol-wta)}~\cite{seidl1994finite-tree-aut,borchardt2005bounds-for-tree}. These can be described as initial-polynomial normalized wata for which all transition polynomials $\delta(q,\sigma)$, for $\sigma \in \sig^{(k)}$, are sums of monomials of the form $c \cdot x_{(q_1, 1)}^{d_1} \cdots x_{(q_k, k)}^{d_k}$. Obviously, any wta is a pol-wta, so letting $\RecPol{\sring}$ denote the class of weighted tree languages recognized by pol-wta over $\sring$, we have $\Rec{\sring} \subseteq \RecPol{\sring} \subseteq \RecA{\sring}$. \theoremname~\ref{thm:eq-iff-loc-fin} tells us that $\Rec{\sring} = \RecPol{\sring} = \RecA{\sring}$ for any locally finite $\sring$. However, the wata in \examplename~\ref{ex:simple} is a pol-wta, showing that not every pol-wta can be converted into an equivalent wta, i.e., $\Rec{\sring} \subsetneq \RecPol{\sring}$ for some (not locally finite) $\sring$. This leaves the question of whether $\RecPol{\sring} = \RecA{\sring}$ for all $\sring$. The answer is negative, as shown in the example below.

\begin{example}
\label{ex:wata-vs-pol-wta}
    Recall $\sig_\mathrm{m} = \SET{\gamma^{(1)}, \alpha^{(0)}}$. Let $\aut = (\sig_\mathrm{m}, \SET{q_{!}, q_{+}}, \nat, x_{q_!}, \delta)$ be a wata and
    \begin{equation*}
    \begin{aligned}
        & \delta(q_!, \gamma) = x_{(q_!, 1)} + x_{(q_!, 1)} \cdot x_{(q_+, 1)} &&\qquad \delta(q_+, \gamma) = 1 + x_{(q_+, 1)} \\
        & \delta(q_!, \alpha) = 1 &&\qquad \delta(q_+, \alpha) = 0 \,.
    \end{aligned}
    \end{equation*}
    For every $n \geq 0$, we have $\lang[\aut](\gamma^n \alpha) = n!$.
    
    Assume, toward a contradiction, that there is a pol-wta $\aut' = (\sig_\mathrm{m}, Q, \nat, \pi, \delta')$ such that $\lang[\aut'] = \lang[\aut]$. For any $b \in \nat$, we have that $n! \notin \bigO(b^n)$, and in order to avoid $\lang[\aut'] \in \bigO(b^n)$, we can via a pumping-like argument conclude that there need to be integers $m, \ell, \ell'$ such that $1 \leq \ell \leq \ell' \leq m$ and states $q_0, \dots, q_m = q_\ell$ in $Q$ such that: (i) $x_{q_0}$ appears in $\pi$; (ii) $x_{(q_{i}, 1)}$ appears in $\delta(q_{i-1}, \gamma)$ for each $i \in [m]$; (iii) $x_{(q_{\ell'}, 1)}$ appears in $\delta(q_{\ell' - 1}, \gamma)$ with exponent $d > 1$; and (iv) $\lang[\aut'; q_{\ell}] (\gamma^n \alpha) > 1$ for any $n \geq m - \ell$. These conditions taken together, however, mean that $\lang[\aut'] \in \bigOmega(2^{2^n})$, contradicting the fact that $n! \notin \bigOmega(2^{2^n})$.
\end{example}

\examplename~\ref{ex:wata-vs-pol-wta} yields the result below.

\begin{proposition}
\label{prop:pol-wta-weaker}
    There are semirings $\sring$ such that $\RecPol{\sring} \subsetneq \RecA{\sring}$, i.e., over which polynomially weighted tree automata are strictly weaker than weighted alternating tree automata.
\end{proposition}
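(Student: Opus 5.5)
The plan is to witness the proposition with $\sring = \nat$ and the wata $\aut$ of \examplename~\ref{ex:wata-vs-pol-wta}, which computes $\gamma^n\alpha \mapsto n!$. The inclusion $\RecPol{\nat} \subseteq \RecA{\nat}$ holds by definition, since a pol-wta is a wata. So it suffices to show that no pol-wta over $\sig_\mathrm{m}$ and $\nat$ recognizes $n \mapsto n!$.

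\textbf{Verifying $\aut$.} First I check by induction on $n$ that $\lang[\aut; q_+](\gamma^n\alpha) = n$. Then, with $f(n) = \lang[\aut; q_!](\gamma^n\alpha)$, the transitions give
\begin{equation*}
f(n+1) = f(n) + f(n)\cdot n = (n+1) f(n), \qquad f(0) = 1,
\end{equation*}
so $f(n) = n!$.

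\textbf{Ruling out pol-wta: a growth dichotomy.} Fix a pol-wta $\aut'$ over $\sig_\mathrm{m}$. Over a string alphabet every monomial of $\delta'(q,\gamma)$ has the form $c\cdot x_{(q',1)}^{d}$. Hence the vector $v_n = (\lang[\aut';q](\gamma^n\alpha))_{q}$ evolves under a map of the shape
\begin{equation*}
v_{n+1}(q) = \sum_{j} c_j\, v_n(q_j)^{d_j}.
\end{equation*}
Each coordinate therefore depends on a \emph{single} previous coordinate per monomial, and the computation unfolds as a sum over paths $q_0 q_1 \cdots q_n$ in a finite graph, with weights multiplied and exponents composed along each path. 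I will show that every coordinate sequence is either $\bigO(b^n)$ for some $b$, or is at least $2^{2^{n-C}}$ infinitely often, in fact for all large $n$ along an arithmetic progression. Since $n!$ is neither $\bigO(b^n)$ nor $\bigOmega(2^{2^{n}})$ along any progression, this rules out $\aut'$.

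\textbf{Establishing the dichotomy.} I do this in two steps.
\begin{enumerate}[label=(\alph*)]
\item Restrict to states reachable from $\pi$, ignoring $\nat$-cancellation issues since $\nat$ has no negatives. Discard states whose value is $0$ for all large $n$. Call a state \emph{large} if its value exceeds $1$ for infinitely many $n$.
\item Examine the cycles in the dependency graph. If some cycle passes through an edge with exponent $d \ge 2$ and can reach a large state, pumping that cycle squares a value $\ge 2$ once per period. A fixed-length suffix then reaches the large state, giving a double-exponential lower bound along a progression; this is conditions (i)--(iv) of the example. Otherwise, every edge with exponent $\ge 2$ lying on a cycle leads only to states with values in $\{0,1\}$ eventually, so along any path those high exponents act trivially on values bounded by a constant. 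Only exponent-$1$ edges on cycles then contribute nontrivially, plus boundedly many acyclic edges. A bound of the form $v_{n+1} \le B\cdot\max_q v_n(q)^{D'}$ that applies only a bounded number of times, together with linear growth otherwise, yields $\bigO(b^n)$.
\end{enumerate}

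\textbf{Main obstacle.} The hard step is (b): making the pumping rigorous. In particular I must handle eventually-periodic behaviour of the $\{0,1\}$-valued states, and show that "large infinitely often" can be upgraded to "large along the needed progression". I would first pass to a power $\aut'^{(p)}$ reading $\gamma^p$ at a time, chosen so that the support and the $\le 1$ patterns become eventually constant. After that, the case split above goes through cleanly.
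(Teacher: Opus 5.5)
Your proposal is correct and follows the paper's route: the factorial wata of Example~\ref{ex:wata-vs-pol-wta} is the witness, and a pumping dichotomy shows that every pol-wta over $\sig_\mathrm{m}$ and $\nat$ is either $\bigO(b^n)$ or grows doubly exponentially along an arithmetic progression. Your progression-based formulation is actually more accurate than the paper's literal $\bigOmega(2^{2^n})$.

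Two small corrections. First, for a pumped cycle of length $L$ the lower bound is $2^{2^{(n-C)/L}}$, not $2^{2^{n-C}}$; this still eventually exceeds $n! \le 2^{n\log_2 n}$, so the contradiction survives. Second, your ``main obstacle'' does not arise: a single $n_0$ at which the reachable state has value at least $2$ already yields the whole progression $n_0 + C + kL$ by pumping. So drop the power-automaton detour, which would in any case leave the pol-wta class, since composing $\delta'(\cdot,\gamma)$ with itself produces cross terms such as $x_{(a,1)}x_{(b,1)}$.
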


\section{Closure properties}
\label{sec:closure}

To get a better sense of the power and limitations of wata, we look at some operations on weighted tree languages constituting potential closure properties for the weighted tree languages defined by wata.

\subsection{Basic operations}

A class $C$ of weighted tree languages is said to be \emph{closed under}
\begin{itemize}
    \item \emph{sum} if, for every $\lang_1$ and $\lang_2$ in $C$ with $\lang_i \colon \trees_\sig \to \sring$, we have that $\lang_1 + \lang_2 \in C$, where $(\lang_1 + \lang_2)(t) = \lang_1(t) + \lang_2(t)$ for each $t \in \trees_\sig$,
    \item \emph{Hadamard product} if, for every $\lang_1$ and $\lang_2$ in $C$ with $\lang_i \colon \trees_\sig \to \sring$, we have that $\lang_1 \cdot \lang_2 \in C$, where $(\lang_1 \cdot \lang_2)(t) = \lang_1(t) \cdot \lang_2(t)$ for each $t \in \trees_\sig$,
    \item \emph{scalar multiplication} if for every $\lang \colon \trees_\sig \to \sring$ in $C$ and $a \in \sring$, we have that $a \cdot \lang \in C$, where $(a \cdot \lang)(t) = a \cdot \lang(t)$ for each $t \in \trees_\sig$, and
    \item \emph{top-concatenation} if, for every $k \in \nat$, $\sigma \in \sig^{(k)}$, and $\lang_1, \dots, \lang_k$ in $C$ with $\lang_i \colon \trees_\sig \to \sring$, we have that $\mathrm{top}_\sigma (\lang_1, \dots, \lang_k) \in C$, where
    \begin{equation*}
        \mathrm{top}_\sigma (\lang_1, \dots, \lang_k) (t) =
        \begin{dcases}
            \prod_{i \in [k]} \lang_i(t_i) &\text{if } t = \sigma[t_1, \dots, t_k]\\
            0 &\text{otherwise,}
        \end{dcases}
    \end{equation*}
    for each $t \in \trees_\sig$.
\end{itemize}
The above closure properties, while interesting in themselves, will also help showcase the flexibility of the wata model; in the proof below, constructions are more or less trivial.

\begin{proposition}
    $\RecA{\sring}$ is closed under:
    \begin{enumerate}[label=(\alph*)]
        \item\label{prop:closure-sum} Sum (cf.\ \cite[Lem.~$1$]{drewes2025dynamically-wei})
        \item\label{prop:closure-hadamard} Hadamard product (cf.\ \cite[Lem.~$1$]{drewes2025dynamically-wei})
        \item\label{prop:closure-scalar} Scalar multiplication
        \item\label{prop:closure-topconcat} Top-concatenation
    \end{enumerate}
\end{proposition}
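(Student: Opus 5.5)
All four constructions work directly with the polynomial transitions of Definition~\ref{def:wata}. Each one combines the initial polynomials of the given wata, leaving the transition mappings essentially untouched.

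\textbf{Sum and Hadamard product.} Let $\aut_1 = (\sig, Q_1, \sring, \pi_1, \delta_1)$ and $\aut_2 = (\sig, Q_2, \sring, \pi_2, \delta_2)$ be wata, renamed so that $Q_1 \cap Q_2 = \emptyset$. We form the disjoint union automaton with state set $Q = Q_1 \cup Q_2$ and transition mapping $\delta = \delta_1 \cup \delta_2$. This is well defined, since $\delta_j(q,\sigma)$ only mentions variables $x_{(q',i)}$ with $q' \in Q_j$. A structural induction on $t$ using \eqref{eq:wata-sem} then gives $\lang[\aut;q](t) = \lang[\aut_j;q](t)$ for every $q \in Q_j$, because the substitution for $q \in Q_j$ never touches states outside $Q_j$.
\begin{itemize}
\item For the sum, take the initial polynomial $\pi_1 + \pi_2$.
\item For the Hadamard product, take the initial polynomial $\pi_1 \cdot \pi_2$.
\end{itemize}
Since substitution is a semiring homomorphism on $\polynom{\sring}{\X_Q}$, we get $\subst{(\pi_1 \circ \pi_2)}{x_q \leftarrow \lang[\aut;q](t)} = \lang[\aut_1](t) \circ \lang[\aut_2](t)$ for $\circ \in \{+,\cdot\}$.

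\textbf{Scalar multiplication.} Here it suffices to replace the initial polynomial $\pi$ by $a \cdot \pi$ and keep everything else unchanged.

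\textbf{Top-concatenation.} This is the only case needing a new state, and I expect it to be the main (mild) obstacle. Given wata $\aut_1, \dots, \aut_k$ with pairwise disjoint state sets, first take their disjoint union exactly as above. Then add a fresh state $p$ and use the initial polynomial $x_p$. The transitions for $p$ are:
\begin{itemize}
\item $\delta(p,\sigma) = \prod_{i \in [k]} \subst{\pi_i}{x_q \leftarrow x_{(q,i)} \mid q \in Q_i}$, which lies in $\polynom{\sring}{\X_{Q \times [k]}}$ as Definition~\ref{def:wata} requires;
\item $\delta(p,\sigma') = 0$ for every $\sigma' \neq \sigma$.
\end{itemize}
Then \eqref{eq:wata-sem} gives $\lang[\aut;p](\sigma[t_1,\dots,t_k]) = \prod_i \subst{\pi_i}{x_q \leftarrow \lang[\aut_i;q](t_i)} = \prod_i \lang[\aut_i](t_i)$, and $\lang[\aut;p](t) = 0$ for trees with a different root symbol. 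For $k=0$ the empty product gives $1$, which is correct.

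The only points to check are the renaming of variables from $\X_{Q}$ to $\X_{Q\times\{i\}}$ and the correctness of the disjoint-union induction, both of which are routine.
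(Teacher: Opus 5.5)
Your proposal is correct and follows the paper's proof essentially verbatim. Both use the disjoint union of the state sets with initial polynomial $\pi_1+\pi_2$, $\pi_1\cdot\pi_2$, or $a\cdot\pi_1$. For top-concatenation, both add a fresh state with initial polynomial given by its variable, whose $\sigma$-transition is $\prod_{i\in[k]}\pi_i$ with each $x_{q'}$ renamed to $x_{(q',i)}$, and $0$ for other symbols. Beyond the paper, you also sketch the correctness argument (evaluation as a semiring homomorphism, and the disjoint-union induction), which the paper leaves to the reader.
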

\begin{proof}
     For any $m \in \nat$ and $i \in [m]$, let $\aut_i = (\sig, Q_i, \sring, \pi_i, \delta_i)$ be arbitrary wata with $\bigcap_{i \in [m]} Q_i = \emptyset$. We give here, for each property \ref{prop:closure-sum}--\ref{prop:closure-topconcat}, the construction of a wata $\aut = (\sig, Q, \sring, \pi, \delta)$ demonstrating the respective closure property, but leave the correctness verification to the reader.
    \begin{itemize}[align=left]
        \item[\ref{prop:closure-sum}--\ref{prop:closure-hadamard}] $Q = Q_1 \cup Q_2$, $\pi = \pi_1 + \pi_2$ for \ref{prop:closure-sum} and $\pi = \pi_1 \cdot \pi_2$ for \ref{prop:closure-hadamard}, and, for each $q \in Q$ and $\sigma \in \sig$,
        \begin{equation*}
            \delta(q, \sigma) =
            \begin{cases}
                \delta_1(q, \sigma) & \text{if } q \in Q_1 \\
                \delta_2(q, \sigma) & \text{if } q \in Q_2 \,.
            \end{cases}
        \end{equation*}
        
        \item[\ref{prop:closure-scalar}] Given $a \in \sring$, we let $Q = Q_1$, $\pi = a \cdot \pi_1$, and $\delta = \delta_1$.

        \item[\ref{prop:closure-topconcat}] For any $k \in \nat$ and $\sigma \in \sig^{(k)}$, we let $Q = \bigcup_{i \in [k]} Q_i \cup \SET{q_0}$, where $q_0$ is fresh, $\pi = x_{q_0}$, and, for each $q \in Q$ and $\gamma \in \sig$,
        \begin{equation*}
        \begin{gathered}[b]
            \delta(q, \gamma) =
            \begin{dcases}
                \prod_{i \in [k]} \subst{\pi_i}{x_{q'} \leftarrow x_{(q',i)} \mid q' \in Q_i} & \text{if } q = q_0 \text{ and } \gamma = \sigma \\
                \delta_i (q, \gamma) & \text{if } q \in Q_i \quad (i \in [k]) \\
                0 & \text{otherwise.}
            \end{dcases}
        \\
        \end{gathered}
        \qedhere
        \end{equation*}
    \end{itemize}
\end{proof}

\subsection{Tree homomorphisms}

Given two ranked alphabets $\sig$ and $\Delta$, a \emph{tree homomorphism} (or \emph{$(\sig, \Delta)$-tree homomorphism}) is a family $h = (h_k \mid k \in \nat)$ of functions $h_k\colon \sig^{(k)} \to \trees_\Delta(\Z_{[k]})$. The family $h$ is extended into a function ${h\colon \trees_\sig \to \trees_\Delta}$ recursively via $h(\sigma[t_1, \dots, t_k]) = \subst{h_k(\sigma)}{z_i \leftarrow h(t_i) \mid i \in [k]}$ for all $k \geq 0$ and $\sigma \in \sig^{(k)}$. A $(\sig, \Delta)$-tree homomorphism $h$ is called \emph{linear} (respectively, \emph{non-deleting}) if $h_k(\sigma)$ is linear (respectively, non-deleting) with respect to $\Z_{[k]}$ for all $k \geq 0$ and $\sigma \in \sig^{(k)}$, and \emph{productive} if $h_k(\sigma) \notin \Z_{[k]}$ for all $k \geq 0$ and $\sigma \in \sig^{(k)}$. Furthermore, $h$ is called \emph{deterministically relabeling}\footnote{Commonly called a \emph{deterministic tree relabeling}.} if, for each $k \geq 0$ and $\sigma \in \sig^{(k)}$, the tree $h_k(\sigma)$ is of the form $\gamma[z_1, \dots, z_k]$ for some $\gamma \in \Delta^{(k)}$. Note that any deterministically relabeling tree homomorphism is linear, non-deleting, and productive. We denote the class of all tree homomorphisms by $\HOM$.

Let $\lang\colon \trees_\sig \to \sring$ be a weighted tree language and $h$ a $(\sig, \Delta)$-tree homomorphism such that $\sring$ is \emph{(countably) complete}\footnote{Roughly, infinite sums are defined and they are associative, commutative, and satisfy the distributivity laws. For the precise definition see, e.g., \cite{droste2009semirings-and-f}.} or $h^{-1}(s)$ is finite for all $s \in \trees_\Delta$ (the latter condition always holds if $h$ is non-deleting and productive). The \emph{application of $h$ to $\lang$} is the weighted tree language $\chi(h)(\lang)\colon \trees_\Delta \to \sring$ defined, for each $s \in \trees_\Delta$, by
\begin{equation*}
    \chi(h)(\lang)(s) = \sum_{t \in h^{-1}(s)} \lang(t)\, .
\end{equation*}
Now, we say that a class $C$ of weighted tree languages is \emph{closed under tree homomorphisms in $H \subseteq \HOM$} if $\chi(h)(\lang)$ is in $C$ for all $\lang$ in $C$ and all tree homomorphisms $h$ in $H$.
It is known that $\Rec{\sring}$ is closed under linear, non-deleting, and productive tree homomorphisms~\cite[Cor.~$12.11.2$]{fulop2022weighted-tree-a}, as well as linear and non-deleting tree homomorphisms given that $\sring$ is complete~\cite[Thm.~$5.3$]{fulop2011weighted-extend}. Due to \theoremname~\ref{thm:eq-iff-loc-fin} we know that the same is true for $\RecA{\sring}$ if $\sring$ is locally finite. However, for semirings that are not locally finite, the situation is different, as illustrated by the following example, adapted from~\cite{grabolle2023a-nivat-theorem}.

\begin{example}
\label{ex:nonclosure-homom}
    Consider the polynomial semiring $\polynom{\mathbb{B}}{\X_{[1]}}$, and let $\sig_1 = \SET{a^{(1)}, c^{(1)}, d^{(1)}, \#^{(1)}, \bot^{(0)}}$ and $\sig_2 = \SET{a^{(1)}, b^{(1)}, \#^{(1)}, \bot^{(0)}}$. Next, let $\lang_1 \colon \trees_{\sig_1} \to \polynom{\mathbb{B}}{\X_{[1]}}$ and $\lang_2 \colon \trees_{\sig_2} \to \polynom{\mathbb{B}}{\X_{[1]}}$ be defined by
    \begin{equation*}
        \lang_1(t) =
            \begin{cases}
                x_1^{ki} &\text{if } t = a^i \# c^k d^\ell \bot \quad (i,k,\ell \geq 0) \\
                0 &\text{otherwise,}
            \end{cases}
        \qquad
        \lang_2(t) =
            \begin{cases}
                \sum_{k = 0}^j x_1^{ki} &\text{if } t = a^i \# b^j \bot \quad (i,j \geq 0) \\
                0 &\text{otherwise,}
            \end{cases}
    \end{equation*}
    and let $h\colon \trees_{\sig_1} \to \trees_{\sig_2}$ be a tree homomorphism defined by $h(a) = a[z_1]$, $h(c) = h(d) = b[z_1]$, $h(\#) = \#[z_1]$, and $h(\bot) = \bot$. Note that $h$ is deterministically relabeling.
    While $\lang_1 \in \RecA{\polynom{\mathbb{B}}{\X_{[1]}}}$, we have that $\chi(h)(\lang_1) = \lang_2 \notin \RecA{\polynom{\mathbb{B}}{\X_{[1]}}}$~\cite[Prop.~$5.3$]{grabolle2023a-nivat-theorem}.
\end{example}

    \examplename~\ref{ex:nonclosure-homom} shows that, over some semirings, the class of weighted tree languages recognized by wata is not closed under tree homomorphisms, even if these are heavily constrained.

\begin{proposition}
    There are semirings $\sring$ for which $\RecA{\sring}$ is not closed under deterministically relabeling tree homomorphisms.
\end{proposition}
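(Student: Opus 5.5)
The plan is to use Example~\ref{ex:nonclosure-homom} as the witness, taking $\sring = \polynom{\mathbb{B}}{\X_{[1]}}$, which is commutative. Two facts are needed: $\lang_1 \in \RecA{\sring}$, and $\chi(h)(\lang_1) = \lang_2$. Combined with \cite[Prop.~$5.3$]{grabolle2023a-nivat-theorem}, which gives $\lang_2 \notin \RecA{\sring}$, these show that $\RecA{\sring}$ is not closed under the deterministically relabeling homomorphism $h$. Since $h$ is non-deleting and productive, each $h^{-1}(s)$ is finite, so $\chi(h)(\lang_1)$ is well defined and no completeness of $\sring$ is required.

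\textbf{Step 1: a wata for $\lang_1$.} I construct the automaton bottom-up along the string $a^i \# c^k d^\ell \bot$. A state $q_d$ accepts the suffix $d^\ell\bot$ with weight $1$, via $\delta(q_d,d)=x_{(q_d,1)}$ and $\delta(q_d,\bot)=1$, and gives $0$ on all other symbols. A state $q_c$ handles $c^k d^\ell\bot$ and should return $x_1^k$; I set $\delta(q_c,c)=x_1\cdot x_{(q_c,1)}$, $\delta(q_c,d)=x_{(q_d,1)}$, $\delta(q_c,\bot)=1$. A state $q_\#$ marks the $\#$ position, with $\delta(q_\#,\#)=x_{(q_c,1)}$. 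Finally a state $q_a$ must produce $(x_1^k)^i$ after reading $a^i$. For this I use the genuinely alternating transition $\delta(q_a,a)=x_{(q_a,1)}\cdot x_{(q_\#',1)}$, where $q_\#'$ evaluates to $x_1^k$ on every suffix $a^{i'}\#c^kd^\ell\bot$. This $q_\#'$ is realized by $\delta(q_\#',a)=x_{(q_\#',1)}$ and $\delta(q_\#',\#)=x_{(q_c,1)}$. I also set $\delta(q_a,\#)=1$, restricted so that it yields $0$ unless a valid suffix follows. To enforce that restriction, I multiply in a checking state that gives $1$ exactly on well-formed strings. The initial polynomial is $x_{q_a}$ times that check. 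A short induction on $i$ then confirms the value $x_1^{ki}$.

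\textbf{Step 2: computing $\chi(h)(\lang_1)$.} The preimage $h^{-1}(a^i\#b^j\bot)$ consists of the strings $a^i\#w\bot$ with $w\in\{c,d\}^j$. Among these, $\lang_1$ is nonzero exactly on $w=c^kd^{j-k}$ for $k=0,\dots,j$. Every other $s$ has no preimage of the form $a^i \# c^k d^\ell \bot$, so the value there is $0$. Summing gives $\lang_2$.

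\textbf{Main obstacle.} The main difficulty is the non-membership $\lang_2\notin\RecA{\sring}$. I will cite it from Grabolle rather than reprove it. This is legitimate because $\sig_2$ is a monadic alphabet and, by the remark after Example~\ref{ex:simple}, wata over strings coincide with the alternating weighted automata considered there. The one point to check is the alphabet convention: Grabolle's strings over $\{a,b,\#\}$ correspond to trees over $\sig_2$ ending in $\bot$. Even though $\sig_2$ has three unary symbols, this correspondence is immediate.
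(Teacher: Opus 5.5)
Your proposal is correct and follows the paper's route. The paper's justification is Example~\ref{ex:nonclosure-homom}, which uses the same semiring $\polynom{\mathbb{B}}{\X_{[1]}}$, the same $\lang_1$, $\lang_2$ and relabeling $h$, and the same appeal to \cite[Prop.~$5.3$]{grabolle2023a-nivat-theorem} for $\lang_2 \notin \RecA{\polynom{\mathbb{B}}{\X_{[1]}}}$. Your explicit wata for $\lang_1$ (via the alternating transition $\delta(q_a,a)=x_{(q_a,1)}\cdot x_{(q_\#',1)}$ with a multiplied-in well-formedness check) and your computation of $\chi(h)(\lang_1)$ correctly supply two facts that the paper only asserts.
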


\subsection{Inverse tree homomorphisms}

A class $C$ of weighted tree languages is \emph{closed under inverse tree homomorphisms in $H \subseteq \HOM$} if $h \mathrel{;} \lang$ is in $C$ for all tree homomorphisms $h$ in $H$ and all $\lang$ in $C$, i.e., $H \mathrel{;} C \subseteq C$.

$\Rec{\sring}$ is known to be closed under inverse \emph{linear} tree homomorphisms~\cite[Thm.~$12.12.2$]{fulop2022weighted-tree-a}, but is not in general closed under inverse tree homomorphisms~\cite[Thm.~$7.1$]{grabolle2023a-nivat-theorem}. We show here that this limitation does not apply to $\RecA{\sring}$.

\begingroup
\microtypesetup{protrusion=true,expansion=true}%
\begin{theorem}
\label{thm:closed-inv-homom}
    $\RecA{\sring}$ is closed under inverse tree homomorphisms, i.e., $\HOM \mathrel{;} \RecA{\sring} \subseteq \RecA{\sring}$.
\end{theorem}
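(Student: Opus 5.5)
Given a $(\Delta, \sring)$-wata $\aut = (\Delta, Q, \sring, \pi, \delta)$ and a $(\sig, \Delta)$-tree homomorphism $h$, we construct a $(\sig, \sring)$-wata $\aut' = (\sig, Q, \sring, \pi, \delta')$ with $\lang[\aut'] = h \mathrel{;} \lang[\aut]$. The state set and initial polynomial stay the same. For each $q \in Q$, we want $\lang[\aut'; q](t) = \lang[\aut; q](h(t))$. The idea is to run $\aut$ symbolically over the tree $h_k(\sigma) \in \trees_\Delta(\Z_{[k]})$. Whenever the run reaches a variable leaf $z_i$ in state $q'$, it produces the polynomial variable $x_{(q',i)}$. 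Non-linearity of $h$ is harmless here: if $z_i$ occurs several times, the corresponding variables $x_{(q',i)}$ simply appear several times, possibly inside the same monomial. This is exactly what wata permit, in contrast to wta.

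Formally, for each $k \geq 0$ and $\sigma \in \sig^{(k)}$, we define by structural induction, for every subtree $s$ of $h_k(\sigma)$ and every $q \in Q$, a polynomial $\Phi_q(s) \in \polynom{\sring}{\X_{Q \times [k]}}$:
\begin{itemize}
\item $\Phi_q(z_i) = x_{(q,i)}$;
\item $\Phi_q(\gamma[s_1, \dots, s_n]) = \subst{\delta(q,\gamma)}{x_{(q',j)} \leftarrow \Phi_{q'}(s_j) \mid q' \in Q, j \in [n]}$ for $\gamma \in \Delta^{(n)}$, which covers nullary $\gamma$ as well.
\end{itemize}
We then set $\delta'(q,\sigma) = \Phi_q(h_k(\sigma))$. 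Since $h_k(\sigma)$ only contains variables from $\Z_{[k]}$ and polynomials are closed under substitution, this lies in $\polynom{\sring}{\X_{Q \times [k]}}$, as Definition~\ref{def:wata} requires. The case $h_k(\sigma) = z_i$, i.e.\ a non-productive $h$, is covered directly and gives $\delta'(q,\sigma) = x_{(q,i)}$. A deleting $h$ is also covered, since variables for deleted $z_i$ just do not occur.

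The key lemma is the following. For every $t = \sigma[t_1, \dots, t_k] \in \trees_\sig$, every subtree $s$ of $h_k(\sigma)$, and every $q \in Q$,
\begin{equation*}
\subst{\Phi_q(s)}{x_{(q',i)} \leftarrow \lang[\aut;q'](h(t_i)) \mid q' \in Q, i \in [k]} = \lang[\aut; q]\bigl(\subst{s}{z_i \leftarrow h(t_i) \mid i \in [k]}\bigr).
\end{equation*}
We prove it by induction on $s$.
\begin{itemize}
\item For $s = z_i$, both sides equal $\lang[\aut;q](h(t_i))$.
\item For $s = \gamma[s_1,\dots,s_n]$, we use the fact that substitution into polynomials is associative, i.e.\ composition of substitutions. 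Substituting into $\Phi_q(s)$ then equals substituting, into $\delta(q,\gamma)$, the values obtained by substituting into each $\Phi_{q'}(s_j)$. By the induction hypothesis these values are $\lang[\aut;q'](\subst{s_j}{\dots})$, and Equation~\eqref{eq:wata-sem} closes the step.
\end{itemize}
Taking $s = h_k(\sigma)$ gives $\lang[\aut;q](h(t)) = \subst{\delta'(q,\sigma)}{x_{(q',i)} \leftarrow \lang[\aut;q'](h(t_i))}$.

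An outer structural induction on $t$ now shows $\lang[\aut';q](t) = \lang[\aut;q](h(t))$. The base case is $k=0$, where $h_0(\sigma)$ is ground. The inductive step uses the lemma together with Equation~\eqref{eq:wata-sem} for $\aut'$. Finally, applying $\pi$ to both sides yields $\lang[\aut'](t) = \lang[\aut](h(t))$.

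The main technical point is the compositionality of polynomial substitution in a commutative semiring. Concretely, evaluating a composed polynomial equals the composition of the evaluations. This is standard, since evaluation is a semiring homomorphism from $\polynom{\sring}{\X_A}$ to $\sring$, and it should be stated explicitly. No normal form of $\aut$ is needed, and no completeness of $\sring$, since inverse homomorphisms involve no summation over preimages.
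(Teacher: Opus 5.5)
Your proposal is correct and follows the paper's proof: your recursively defined $\Phi_q(s)$ is exactly $\lang[\aut_\Z; q](s)$ for the paper's auxiliary wata $\aut_\Z$. That wata works over $\polynom{\sring}{\X_{Q\times[m]}}$, treats the $z_i$ as nullary symbols with $\delta_\Z(q,z_i)=x_{(q,i)}$, and sets $\delta'(q,\sigma)$ in the same way, and your key lemma and outer induction on $t$ coincide with the paper's Claim and main induction. Your explicit appeal to the compositionality of polynomial substitution only names a step the paper uses without comment.
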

\endgroup
\begin{proof}
    Let $\aut = (\Delta, Q, \sring, \pi, \delta)$ be a wata and $h\colon \trees_\sig \to \trees_\Delta$ a tree homomorphism. We construct a wata $\aut' = (\sig, Q, \sring, \pi, \delta')$ such that $\lang[\aut'] = h \mathrel{;} \lang[\aut]$ as follows.

    We first extend $\aut$ to a wata $\aut_\Z$ capable of handling the necessary variables. For this, we let $m = \mxrk(\sig)$ and $\Delta_\Z = \Delta \cup \Z_{[m]}$, in which the variables in $\Z_{[m]}$ are viewed as symbols of rank $0$. We then define the wata $\aut_\Z = (\Delta_\Z, Q, \polynom{\sring}{\X_{Q \times [m]}}, \pi, \delta_\Z)$ over the ranked alphabet $\Delta_\Z$ and the semiring $\polynom{\sring}{\X_{Q \times [m]}}$ by letting $\delta_\Z(q,z_i) = x_{(q,i)}$ and $\delta_\Z(q,\gamma) = \delta(q,\gamma)$ for all $q \in Q$, $i \in [m]$, and $\gamma \in \Delta$. Obviously, $\lang[\aut ; q](t) = \lang[\aut_\Z ; q](t)$ for any $q \in Q$ and $t \in \trees_\Delta$.
    Next, to define the transition function $\delta'$ of $\aut'$, we simply let $\delta'(q,\sigma) = \lang[\aut_\Z; q](h_k(\sigma))$ for all $q \in Q$, $k \geq 0$, and $\sigma \in \sig^{(k)}$.

    Since $\aut$ and $\aut'$ have the same initial polynomial, it is sufficient to show $\lang[\aut'; q](t) = \lang[\aut; q](h(t))$ for all $q \in Q$ and $t \in \trees_\sig$, which we do via structural induction on $t = \sigma[t_1, \dots, t_k]$:

    For $k = 0$, we have that
    \begin{equation*}
        \lang[\aut'; q](\sigma) = \delta'(q, \sigma) = \lang[\aut_\Z; q](h_0(\sigma)) = \lang[\aut_\Z; q](h(\sigma)) = \lang[\aut; q](h(\sigma)),
    \end{equation*}
    since $h(\sigma) \in \trees_\Delta$. Now, assuming $k > 0$ and $\lang[\aut'; q'](t_i) = \lang[\aut; q'](h(t_i))$ for all $q' \in Q$ and $i \in [k]$, we get
    \begin{equation*}
    \begin{aligned}
        \lang[\aut'; q](\sigma[t_1, \dots, t_k])
        &= \subst{\delta'(q, \sigma)}{x_{(q',i)} \leftarrow \lang[\aut'; q'](t_i) \mid q' \in Q, i \in [k]} && \\
        &= \subst{\delta'(q, \sigma)}{x_{(q',i)} \leftarrow \lang[\aut; q'](h(t_i)) \mid q' \in Q, i \in [k]} &&\text{(by assumption)} \\
        &= \subst{\lang[\aut_\Z ; q](h_k(\sigma))}{x_{(q',i)} \leftarrow \lang[\aut; q'](h(t_i)) \mid q' \in Q, i \in [k]} && \\
        &= \lang[\aut; q] \left( \subst{h_k(\sigma)}{z_{i} \leftarrow h(t_i) \mid i \in [k]} \right) &&\text{(by \claimname~\ref{claim:closed-inv-homom-helper})} \\
        &= \lang[\aut; q](h(\sigma[t_1, \dots, t_k])) \,. &&
    \end{aligned}
    \end{equation*}
    \claimname~\ref{claim:closed-inv-homom-helper} is given and proved below.
    \begin{claim}
    \label{claim:closed-inv-homom-helper}
        For all $s \in \trees_\Delta(\Z_{[k]})$ and $q \in Q$, we have that
        \begin{equation*}
            \subst{\lang[\aut_\Z ; q] (s)}{x_{(q', i)} \leftarrow \lang[\aut ; q'](h(t_i)) \mid q' \in Q, i \in [k]} = \lang[\aut; q] (\subst{s}{z_i \leftarrow h(t_i) \mid i \in [k]}) \,.
        \end{equation*}
    \end{claim}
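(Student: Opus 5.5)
We prove the claim by structural induction on $s \in \trees_\Delta(\Z_{[k]})$, simultaneously for all $q \in Q$. Throughout, write $\theta$ for the substitution $x_{(q',i)} \leftarrow \lang[\aut;q'](h(t_i))$, $q' \in Q$, $i \in [k]$. Viewing $\theta$ as a map from $\polynom{\sring}{\X_{Q\times[m]}}$ to $\sring$ (variables with index $i > k$ do not occur in the relevant polynomials, since $s$ only contains $z_1,\dots,z_k$), $\theta$ is a semiring homomorphism, because $\sring$ is commutative. It therefore commutes with sums, products and constants, and more generally with substitution of polynomials into polynomials. This homomorphism property is the only algebraic fact the argument needs.

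\emph{Base cases.} If $s = z_i$ for some $i \in [k]$, then $\lang[\aut_\Z;q](z_i) = \delta_\Z(q,z_i) = x_{(q,i)}$. Applying $\theta$ gives $\lang[\aut;q](h(t_i))$, and this equals $\lang[\aut;q](\subst{z_i}{z_j \leftarrow h(t_j) \mid j\in[k]})$. If $s = \gamma \in \Delta^{(0)}$, then $\lang[\aut_\Z;q](\gamma) = \delta(q,\gamma)$ is a constant in $\sring$, which $\theta$ fixes. This constant is exactly $\lang[\aut;q](\gamma)$, and the substitution leaves $\gamma$ unchanged.

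\emph{Inductive step.} Let $s = \gamma[s_1,\dots,s_n]$ with $\gamma \in \Delta^{(n)}$, $n \geq 1$. By \eqref{eq:wata-sem} applied to $\aut_\Z$,
\[
\lang[\aut_\Z;q](s) = \subst{\delta(q,\gamma)}{x_{(p,j)} \leftarrow \lang[\aut_\Z;p](s_j) \mid p\in Q, j\in[n]}.
\]
The coefficients of $\delta(q,\gamma)$ lie in $\sring$ and are fixed by $\theta$. Since $\theta$ is a homomorphism, applying it after this inner substitution is the same as substituting $\theta(\lang[\aut_\Z;p](s_j))$ for $x_{(p,j)}$ in $\delta(q,\gamma)$. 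By the induction hypothesis, $\theta(\lang[\aut_\Z;p](s_j)) = \lang[\aut;p](s_j')$, where $s_j' = \subst{s_j}{z_i \leftarrow h(t_i)\mid i\in[k]}$. The substituted tree is $\subst{s}{z_i\leftarrow h(t_i)\mid i\in[k]} = \gamma[s_1',\dots,s_n']$, and it lies in $\trees_\Delta$. Applying \eqref{eq:wata-sem} to $\aut$ on this tree gives precisely the right-hand side of the claim.

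The main obstacle is the variable clash in the inductive step. In $\delta(q,\gamma)$ the variables $x_{(p,j)}$ refer to the children positions of $\gamma$, while in $\lang[\aut_\Z;\cdot]$ the variables $x_{(q',i)}$ refer to the homomorphism variables $z_i$, and both families live in $\X_{Q\times\nat}$. This is resolved by noting that the inner substitution eliminates every position variable of $\delta(q,\gamma)$, so that afterwards only the $z$-variables from the leaves remain. The composition of substitutions is then the ordinary one and needs no renaming. With this observation made explicit, the remaining manipulations are just the homomorphism property of evaluation in the commutative semiring $\polynom{\sring}{\X_{Q\times[m]}}$.
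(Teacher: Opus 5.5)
Your proof is correct and takes the paper's approach: structural induction on $s$ with the cases $\gamma \in \Delta^{(0)}$, $z_j$, and $\gamma[s_1,\dots,s_\ell]$, where the inductive step pushes the outer substitution inside $\delta(q,\gamma)$ and then applies the induction hypothesis. Your observation that the substitution is a semiring homomorphism fixing the coefficients of $\delta(q,\gamma)$, and that the inner substitution removes every position variable, is exactly the justification the paper leaves implicit in its composition-of-substitutions step.
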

    The proof of the claim is again via structural induction, on $s$:\\
    For $s = \gamma \in \Delta^{(0)}$, we have
    \begin{equation*}
    \begin{aligned}
        &\phantom{{}={}} \subst{\lang[\aut_\Z ; q] (\gamma)}{x_{(q',i)} \leftarrow \lang[\aut; q'](h(t_i)) \mid q' \in Q, i \in [k]} && \\
        &= \lang[\aut; q](\gamma) && \\
        &= \lang[\aut; q] \left( \subst{\gamma}{z_i \leftarrow h(t_i) \mid i \in [k]} \right) \,. &&
    \end{aligned}
    \end{equation*}
    If $s = z_j$ for some $j \in [k]$, we have
    \begin{equation*}
    \begin{aligned}
        &\phantom{{}={}} \subst{\lang[{\aut_\Z ; q}] (z_j)}{x_{(q', i)} \leftarrow \lang[\aut; q'] (h(t_i)) \mid q' \in Q, i \in [k]} && \\
        &= \subst{x_{(q,j)}}{x_{(q', i)} \leftarrow \lang[\aut; q'] (h(t_i)) \mid q' \in Q, i \in [k]} && \\
        &= \lang[\aut; q] (h(t_j)) && \\
        &= \lang[\aut; q] \left( \subst{z_j}{z_i \leftarrow h(t_i) \mid i \in [k]} \right) \,. &&
    \end{aligned}
    \end{equation*}
    Finally, if $s = \gamma[s_1, \dots, s_\ell]$ for some $\ell > 0$, we assume that the claim holds for all $s_i$ and get
    \begin{equation*}
    \begin{aligned}
        &\phantom{{}={}} \subst{\lang[\aut_\Z, q] (\gamma[s_1, \dots, s_\ell])}{x_{(q', i)} \leftarrow \lang[\aut; q'](h(t_i)) \mid q' \in Q, i \in [k]} && \\
        &= \subst{ \left( \subst{\delta_\Z (q, \gamma)}{x_{(q',i)} \leftarrow \lang[\aut_\Z ; q'] (s_i) \mid q' \in Q, i \in [\ell]} \right) }{x_{(q', i)} \leftarrow \lang[\aut; q'](h(t_i)) \mid q' \in Q, i \in [k]} && \\
        &= \subst{ \left( \subst{\delta (q, \gamma)}{x_{(q',i)} \leftarrow \lang[\aut_\Z ; q'] (s_i) \mid q' \in Q, i \in [\ell]} \right) }{x_{(q', i)} \leftarrow \lang[\aut; q'](h(t_i)) \mid q' \in Q, i \in [k]} && \\
        &= \subst{\delta (q, \gamma)}{x_{(q',i)} \leftarrow \left( \subst{\lang[\aut_\Z ; q'] (s_i)} {x_{(p,j)} \leftarrow \lang[\aut; p](h(t_j)) \mid p \in Q, j \in [k]} \right) \mid {q' \in Q, i \in [\ell]}} && \\
        &= \subst{\delta (q, \gamma)}{x_{(q',i)} \leftarrow
        \lang[\aut; q'] \left( \subst{s_i}{z_j \leftarrow h(t_j) \mid j \in [k]} \right)
        \mid {q' \in Q, i \in [\ell]}} \qquad \text{(by assumption)} && \\
        &= \lang[\aut; q] \left( \subst{\gamma[s_1, \dots, s_\ell]}{z_i \leftarrow h(t_i) \mid i \in [k]} \right) \,. &&
    \end{aligned}
    \end{equation*}
    This concludes the proof of \claimname~\ref{claim:closed-inv-homom-helper}, and thereby the proof of \theoremname~\ref{thm:closed-inv-homom}.
\end{proof}

\section{A characterization theorem}
\label{sec:characterization}

In this section, we characterize the class of weighted tree languages defined by wata in terms of wta and tree homomorphisms. We begin by proving a most central decomposition result, which will directly lead to the main theorem of this paper (\theoremname~\ref{thm:characterization}).

\begin{proposition}
\label{prop:decomp}
    $\RecA{\sring} \subseteq \HOM \mathrel{;} \Rec{\sring}$.
\end{proposition}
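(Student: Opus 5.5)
The plan is to simulate the non-linearity of a wata by letting a tree homomorphism make one copy of each subtree for every state. A wta then runs on the copied tree, and each copy is consumed by at most one state occurrence of a monomial.

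First I normalize. Let $L \in \RecA{\sig,\sring}$ be recognized by a wata $\aut = (\sig, Q, \sring, \pi, \delta)$. By \lemmaname s~\ref{lem:normalized} and~\ref{lem:degree1} I assume $\aut$ is normalized, so $\pi = \sum_{q \in Q'} x_q$ for some $Q' \subseteq Q$. Every $\delta(q,\sigma)$ with $\sigma \in \sig^{(k)}$ is then a sum of distinct monomials $c \cdot \prod_{(q',i) \in S} x_{(q',i)}$ with $S \subseteq Q \times [k]$, so each variable occurs with exponent at most $1$.

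Next I define the homomorphism. Enumerate $Q = \SET{q_1, \dots, q_n}$ and let $\Delta$ contain, for each $\sigma \in \sig^{(k)}$, a symbol $\hat\sigma$ of rank $kn$. Define $h_k(\sigma) = \hat\sigma[z_1, \dots, z_1, \dots, z_k, \dots, z_k]$, where each $z_i$ is repeated $n$ times. The argument position indexed by $(i, q_j)$ is intended for the computation of state $q_j$ on the $i$-th subtree. Then $h(\sigma[t_1,\dots,t_k]) = \hat\sigma[h(t_1), \dots, h(t_1), \dots, h(t_k), \dots, h(t_k)]$. This copying is exactly what makes $h$ non-linear, and it is the only source of non-linearity.

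Then I build the wta $\mathcal B = (\Delta, Q \cup \SET{p_1}, Q', R, \sring, \wt)$, where $p_1$ is a fresh state. For each $\hat\sigma$, $p_1$ has the single rule $p_1[\hat\sigma] \xrightarrow{1} \langle p_1, \dots, p_1 \rangle$, so $\lang[\mathcal B; p_1](s) = 1$ for every $s \in \trees_\Delta$. For each $q \in Q$, each $\sigma \in \sig^{(k)}$, and each monomial $c \cdot \prod_{(q',i) \in S} x_{(q',i)}$ of $\delta(q,\sigma)$, I add the rule $q[\hat\sigma] \xrightarrow{c} \langle r_{(1,q_1)}, \dots, r_{(k,q_n)} \rangle$. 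Here $r_{(i,q_j)} = q_j$ if $(q_j,i) \in S$, and $r_{(i,q_j)} = p_1$ otherwise. Distinct monomials yield distinct rules, so no weights are merged.

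I then prove $\lang[\mathcal B; q](h(t)) = \lang[\aut; q](t)$ for all $q \in Q$ and $t \in \trees_\sig$ by structural induction on $t = \sigma[t_1,\dots,t_k]$. Each monomial contributes $c \cdot \prod_{(q',i) \in S} \lang[\mathcal B; q'](h(t_i))$ times a product of factors $\lang[\mathcal B; p_1](h(t_i)) = 1$. By the induction hypothesis this equals the monomial evaluated at $x_{(q',i)} \leftarrow \lang[\aut;q'](t_i)$. Summing over monomials gives \eqref{eq:wata-sem}. Since the designated states are $Q'$, it follows that $\lang[\mathcal B](h(t)) = \sum_{q \in Q'} \lang[\aut;q](t) = \lang[\aut](t)$, i.e.\ $L = h \mathrel{;} \lang[\mathcal B]$.

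The step I expect to need the most care is the requirement that each copy of a subtree be used at most once per monomial. This is exactly why the degree-$1$ normalization of \lemmaname~\ref{lem:degree1} is invoked first. The other point to check is that unused copies contribute the neutral factor $1$ rather than being left uncovered, which the state $p_1$ handles.
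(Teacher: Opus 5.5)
Your proposal is correct and matches the paper's proof essentially step for step. Both normalize $\aut$ via Lemmas~\ref{lem:normalized} and~\ref{lem:degree1}, and both copy each subtree $n = |Q|$ times via $h_k(\sigma) = \sigma[z_1,\dots,z_1,\dots,z_k,\dots,z_k]$. Both then let a wta send each variable of a degree-$1$ monomial to its own copy and pad the unused copies with a constant-$1$ state (the paper's $q_r$, your $p_1$). The only difference is cosmetic: you introduce fresh symbols $\hat\sigma$ of rank $kn$, whereas the paper reassigns rank $n\cdot k$ to $\sigma$ itself.
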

\begin{proof}
    We show that, given a $(\sig, \sring)$-wata $\aut$, there is a ranked alphabet $\Gamma$, a tree homomorphism $h\colon \trees_\sig \to \trees_\Gamma$, and a $(\Gamma, \sring)$-wta $\aut'$ such that $\lang[\aut] = h \mathrel{;} \lang[\aut']$.
    
    Let $\aut = (\sig, Q, \sring, \pi, \delta)$ be normalized (by \lemmaname s~\ref{lem:normalized} and \ref{lem:degree1}), and let $Q = \SET{q_1, \dots, q_n}$. Since $\aut$ is normalized, in any monomial making up part of a polynomial in $\Ima(\delta)$, any variable $x_{(q,i)}$ occurs (with exponent $1$) at most once, so a subtree may be used at most $n$ times in the computation induced by such a monomial. The idea now is to define $\Gamma$ and $h$ in such a way that the repeated use of a single subtree in these monomial computations can be replaced by the copying capability of $h$.
    
    For each $k \geq 0$, we let $\Gamma^{(n \cdot k)} = \sig^{(k)}$, i.e., each symbol $\sigma$ in $\sig$ of rank $k$ is a symbol of rank $n \cdot k$ in $\Gamma$. We define $h\colon \trees_\sig \to \trees_\Gamma$ by, for each $k \geq 0$ and $\sigma \in \sig^{(k)}$, letting
    \begin{equation*}
        h_k(\sigma) = \sigma[\underbrace{z_1, \dots, z_1}_{n}, \dots, \underbrace{z_k, \dots, z_k}_{n}] \,.
    \end{equation*}
    Intuitively, $h(t)$ is the result of recursively making $n$ copies of each subtree of $t$.

    We now define the wta $\aut' = (\Gamma, Q \cup \SET{q_r}, Q_d, R, \sring, \wt)$, where $q_r \notin Q$. Since $\aut$ is normalized, there is a $Q' \subseteq Q$ such that $\pi = \sum_{q \in Q'} x_q$, and we let $Q_d = Q'$. For each $k \geq 0$, $\sigma \in \Gamma^{(n \cdot k)}$, and $q \in Q$, we define $R_{q;\sigma}$ and $\wt$ as follows. Let $m_1, \dots, m_\ell$ be the unique monomials (of degree at most $1$) whose sum makes up the polynomial $\delta(q,\sigma)$, and let $c_1, \dots, c_\ell$ be their corresponding coefficients. For each $j \in [\ell]$, we let $R_{q;\sigma}$ contain the rule $q[\sigma] \xrightarrow{c_j} \langle p_{(1,1)}, \dots, p_{(n,1)}, \dots, p_{(1,k)}, \dots, p_{(n,k)} \rangle$, where $p_{(i,i')} = q_i$ if $x_{(q_i, i')}$ occurs in $m_j$ and $p_{(i,i')} = q_r$ otherwise. For all $k \geq 0$ and $\sigma \in \Gamma^{(k)}$, $R_{q_r;\sigma}$ contains the single rule $q_r[\sigma] \xrightarrow{1} \langle \underbrace{q_r, \dots, q_r}_{k} \rangle$. Note that $\lang[\aut'; q_r](t) = 1$ for any $t \in \trees_\Gamma$.

    To show $h \mathrel{;} \lang[\aut'] = \lang[\aut]$, we note that, due to the definition of $Q_d$, it is enough to show $\lang[\aut'; q](h(t)) = \lang[\aut; q](t)$ for all $q \in Q$ and $t \in \trees_\sig$, which we do via structural induction on $t = \sigma[t_1, \dots, t_k]$.

    For $k = 0$, we have
    \begin{equation*}
        \lang[\aut'; q](h(\sigma)) = \lang[\aut'; q](\sigma) = \wt(q,\sigma, \langle\rangle) = \delta(q, \sigma) = \lang[\aut; q](\sigma) \,.
    \end{equation*}

    Next, we assume $k > 0$ and that $\lang[\aut'; q'](h(t_i)) = \lang[\aut; q'](t_i)$ holds for all $q' \in Q$ and $i \in [k]$. Let $m_1, \dots, m_\ell$ and $c_1, \dots, c_\ell$ be defined as earlier and, for all $j \in [\ell]$, let $I_j = \SETC{(i, i') \in [n] \times [k]}{x_{(q_i,i')} \text{ occurs in } m_j}$. Then we have
    \begin{equation*}
    \begin{gathered}[b]
    \begin{aligned}
        \lang[\aut'; q](h(\sigma[t_1, \dots, t_k])) &= \lang[\aut'; q] \left( \subst{h_k(\sigma)}{z_i \leftarrow h(t_i) \mid i \in [k]} \right) && \\
        &= \lang[\aut'; q](\sigma[\underbrace{h(t_1), \dots, h(t_1)}_{n}, \dots, \underbrace{h(t_k), \dots, h(t_k)}_{n}]) && \\
        &= \sum_{\substack{\rho = (q,\sigma,w) \in R_{q;\sigma} \\ w = \langle p_{(1,1)}, \dots, p_{(n,1)}, \dots, p_{(1,k)}, \dots, p_{(n,k)} \rangle}} \wt(\rho) \cdot \prod_{(i,i') \in [n] \times [k]} \lang[\aut'; p_{(i,i')}](h(t_{i'})) && \\
        &= \sum_{j \in [\ell]} c_j \cdot \left( \prod_{(i, i') \in I_j} \lang[\aut'; q_i] (h(t_{i'})) \right) \cdot \prod_{(i,i') \in ([n] \times [k]) \setminus I_j} \lang[\aut'; q_r] (h(t_{i'})) && \\
        &= \sum_{j \in [\ell]} c_j \cdot \prod_{(i,i') \in I_j} \lang[\aut'; q_i] (h(t_{i'})) && \\
        &= \sum_{j \in [\ell]} c_j \cdot \prod_{(i,i') \in I_j} \lang[\aut; q_i] (t_{i'}) \qquad \text{(by assumption)} && \\
        &= \sum_{j \in [\ell]} \subst{m_j}{x_{(q_i, i')} \leftarrow \lang[\aut; q_i] (t_{i'}) \mid (i, i') \in [n] \times [k]} && \\
        &= \subst{\delta(q, \sigma)}{x_{(q_i, i')} \leftarrow \lang[\aut; q_i] (t_{i'}) \mid (i, i') \in [n] \times [k]} && \\
        &= \lang[\aut; q] (\sigma[t_1, \dots, t_k]) \,. &&
    \end{aligned}
    \\
    \end{gathered}
    \qedhere
    \end{equation*}
\end{proof}

Using the decomposition of \propositionname~\ref{prop:decomp} together with \theoremname~\ref{thm:closed-inv-homom}, we can quite easily prove that the class of weighted tree languages defined by weighted alternating tree automata is exactly the closure of the recognizable weighted tree languages under inverse tree homomorphisms. This result generalizes \cite[Thm.~$4.5$]{grabolle2023a-nivat-theorem} from strings to trees.
\begin{theorem}
\label{thm:characterization}
    $\RecA{\sring} = \HOM \mathrel{;} \Rec{\sring}$.
\end{theorem}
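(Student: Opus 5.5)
The equality splits into two inclusions, and each follows from results already established. The inclusion $\RecA{\sring} \subseteq \HOM \mathrel{;} \Rec{\sring}$ is exactly \propositionname~\ref{prop:decomp}. It remains to show $\HOM \mathrel{;} \Rec{\sring} \subseteq \RecA{\sring}$.

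For the reverse inclusion, I first observe that $\Rec{\sring} \subseteq \RecA{\sring}$. Given a wta $\aut = (\Gamma, Q, Q_d, R, \sring, \wt)$, I define the wata $(\Gamma, Q, \sring, \pi, \delta)$ by
\[
\pi = \sum_{q \in Q_d} x_q, \qquad
\delta(q,\gamma) = \sum_{\rho = (q,\gamma,q_1\cdots q_k) \in R} \wt(\rho)\cdot \prod_{i\in[k]} x_{(q_i,i)} .
\]
This is the observation at the start of \sectionname~\ref{sec:power}; alternatively it is the special case of \theoremname~\ref{thm:diff-sem-same-class} with $Q_\otimes = \emptyset$. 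Comparing \eqref{eq:wata-sem} with the wta semantics, a trivial structural induction gives $\lang[\aut;q]$ equal in both models. Hence the wata recognizes the same language as $\aut$.

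Now take any $h \mathrel{;} \lang$ with $h \in \HOM$ and $\lang \in \Rec{\sring}$. Then $\lang \in \RecA{\sring}$ by the observation above, and \theoremname~\ref{thm:closed-inv-homom} gives $h \mathrel{;} \lang \in \RecA{\sring}$. This establishes $\HOM \mathrel{;} \Rec{\sring} \subseteq \HOM \mathrel{;} \RecA{\sring} \subseteq \RecA{\sring}$.

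No step is a genuine obstacle, since the work is done in \propositionname~\ref{prop:decomp} and \theoremname~\ref{thm:closed-inv-homom}. The only point needing care is the embedding $\Rec{\sring} \subseteq \RecA{\sring}$. I must check that the wata built from a wta uses the same semiring $\sring$, so the classes match exactly. I must also check that the alphabet $\Gamma$ arising from $h$ in \propositionname~\ref{prop:decomp} is an ordinary ranked alphabet, so that $\HOM \mathrel{;} \Rec{\sring}$ is understood over arbitrary alphabets, consistent with the convention for $\Rec{\sring}$ and $\RecA{\sring}$.
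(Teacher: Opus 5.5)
Your proposal is correct and follows the paper's proof exactly: one inclusion is Proposition~\ref{prop:decomp}, and the other combines $\Rec{\sring} \subseteq \RecA{\sring}$ with Theorem~\ref{thm:closed-inv-homom} via $\HOM \mathrel{;} \Rec{\sring} \subseteq \HOM \mathrel{;} \RecA{\sring} \subseteq \RecA{\sring}$. Your explicit translation of a wta into a wata is a detail the paper leaves implicit; it only remarks at the start of Section~\ref{sec:power} that wta are special cases of wata.
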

\begin{proof}
    One inclusion is already given by \propositionname~\ref{prop:decomp}. The other inclusion is derived from \theoremname~\ref{thm:closed-inv-homom} together with the fact that $\Rec{\sring} \subseteq \RecA{\sring}$:
    \begin{equation*}
        \HOM \mathrel{;} \Rec{\sring} \subseteq \HOM \mathrel{;} \RecA{\sring} \subseteq \RecA{\sring} \,. \qedhere
    \end{equation*}
\end{proof}

The characterization in \theoremname~\ref{thm:characterization} can be used to derive certain closure properties of the wata-defined weighted tree languages by leveraging known closure properties of the recognizable weighted tree languages.
We demonstrate this idea on the closure under semiring homomorphisms.
Given two semirings $\sring_1 = (\sringset_1, +_1, \times_1, 0_1, 1_1)$ and $\sring_2 = (\sringset_2, +_2, \times_2, 0_2, 1_2)$, a \emph{semiring homomorphism} is a function $f \colon \sringset_1 \to \sringset_2$ with $f(0_1) = 0_2$, $f(1_1) = 1_2$, $f(a +_1 b) = f(a) +_2 f(b)$, and $f(a \times_1 b) = {f(a) \times_2 f(b)}$ for all $a,b \in \sringset_1$.
A class $C$ of weighted tree languages is \emph{closed under semiring homomorphisms} if, for each $\lang \colon \trees_\sig \to \sring_1$ in $C$, semiring $\sring_2$, and semiring homomorphism $f \colon \sring_1 \to \sring_2$, we have that $\lang \mathrel{;} f \colon \trees_\sig \to \sring_2$ is in $C$.
Below, we will use the fact that $\Rec{\_}$ is closed under semiring homomorphisms~\cite[Lem.~$3$]{borchardt2006cut-sets-as-rec}.

\begin{proposition}
\label{prop:semiring-homom}
    $\RecA{\_}$ is closed under semiring homomorphisms.
\end{proposition}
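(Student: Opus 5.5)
The plan is to route the argument through the characterization of \theoremname~\ref{thm:characterization} and the known closure of $\Rec{\_}$ under semiring homomorphisms. Take $\lang \colon \trees_\sig \to \sring_1$ in $\RecA{\sring_1}$, a semiring $\sring_2$, and a semiring homomorphism $f \colon \sring_1 \to \sring_2$. Applying \theoremname~\ref{thm:characterization} to $\sring_1$, there are a ranked alphabet $\Gamma$, a tree homomorphism $h \colon \trees_\sig \to \trees_\Gamma$ and some $\lang' \in \Rec{\Gamma, \sring_1}$ with $\lang = h \mathrel{;} \lang'$. Since composition of functions is associative, $\lang \mathrel{;} f = h \mathrel{;} (\lang' \mathrel{;} f)$.

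The key step is then that $\lang' \mathrel{;} f \in \Rec{\sring_2}$. This follows from \cite[Lem.~$3$]{borchardt2006cut-sets-as-rec}, or directly: replace every rule weight $\wt(\rho)$ of a wta for $\lang'$ by $f(\wt(\rho))$. Because $f$ preserves finite sums and products, including the empty ones via $f(0_1)=0_2$ and $f(1_1)=1_2$, an induction on trees shows that the new automaton computes $f \circ \lang[\aut'; q]$ in each state $q$. Hence $h \mathrel{;} (\lang' \mathrel{;} f) \in \HOM \mathrel{;} \Rec{\sring_2}$, and \theoremname~\ref{thm:characterization} applied to $\sring_2$ places this in $\RecA{\sring_2}$.

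I expect no real obstacle here; the only point to watch is that \theoremname~\ref{thm:characterization} must be invoked separately for $\sring_1$ and for $\sring_2$. This is fine because it holds for every commutative semiring. One should also note that $\sring_2$ is tacitly commutative, as everywhere in the paper, since wata are only defined over commutative semirings.

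As a sanity check, a direct argument also works. Given a wata $\aut = (\sig, Q, \sring_1, \pi, \delta)$, apply $f$ to every coefficient of $\pi$ and of each $\delta(q,\sigma)$. Because $f$ commutes with polynomial evaluation (it is a semiring homomorphism and both semirings are commutative), induction on \eqref{eq:wata-sem} shows that the resulting wata recognizes $\lang[\aut] \mathrel{;} f$. The point of the proposition as placed in the paper, however, is to showcase the characterization, so I would present the first argument.
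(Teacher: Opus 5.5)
Your proof is correct and is essentially the paper's argument: the paper proves the proposition by the chain $\RecA{\sring_1} \mathrel{;} \xhom{(\sring_1, \sring_2)}{}{} \subseteq \HOM \mathrel{;} \Rec{\sring_1} \mathrel{;} \xhom{(\sring_1, \sring_2)}{}{} \subseteq \HOM \mathrel{;} \Rec{\sring_2} \subseteq \RecA{\sring_2}$, which uses Theorem~\ref{thm:characterization} once for each semiring together with the closure of $\Rec{\_}$ under semiring homomorphisms from \cite{borchardt2006cut-sets-as-rec}. Your alternative direct argument, which applies $f$ to every coefficient of $\pi$ and of each $\delta(q,\sigma)$, is also sound and would avoid the characterization altogether.
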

\begin{proof}
    Let $\sring_1$ and $\sring_2$ be two arbitrary semirings and let $\xhom{(\sring_1, \sring_2)}{}{}$ denote the class of all semiring homomorphisms from $\sring_1$ to $\sring_2$. Then
    \begin{equation*}
    \begin{gathered}[b]
    \begin{aligned}
        \RecA{\sring_1} \mathrel{;} \xhom{(\sring_1, \sring_2)}{}{} &\subseteq \HOM \mathrel{;} \Rec{\sring_1} \mathrel{;} \xhom{(\sring_1, \sring_2)}{}{} \\
        &\subseteq \HOM \mathrel{;} \Rec{\sring_2} \\
        &\subseteq \RecA{\sring_2} \,.
    \end{aligned}
    \\
    \end{gathered}
    \qedhere
    \end{equation*}
\end{proof}

\theoremname~\ref{thm:characterization} also allows us to give an alternative proof of the following known result, generalizing the closure under inverse tree homomorphisms from unweighted recognizable tree languages to recognizable tree languages over locally finite semirings.

\begin{corollary}[{\cite[Thm.~$7.1$]{grabolle2023a-nivat-theorem}}]
    $\HOM \mathrel{;} \Rec{\sring} = \Rec{\sring}$ if and only if $\sring$ is locally finite.
\end{corollary}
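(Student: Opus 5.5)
The plan is to combine \theoremname~\ref{thm:characterization} with \theoremname~\ref{thm:eq-iff-loc-fin}, so that both directions reduce to facts already established.

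For the ``if'' direction, assume $\sring$ is locally finite. The inclusion $\Rec{\sring} \subseteq \HOM \mathrel{;} \Rec{\sring}$ is immediate: compose with the identity tree homomorphism, i.e.\ $h_k(\sigma) = \sigma[z_1, \dots, z_k]$. For the reverse inclusion, we chain three earlier results:
\begin{equation*}
    \HOM \mathrel{;} \Rec{\sring} = \RecA{\sring} = \Rec{\sring}.
\end{equation*}
The first equality is \theoremname~\ref{thm:characterization}. The second is \theoremname~\ref{thm:eq-iff-loc-fin}, which applies because $\sring$ is locally finite.

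For the ``only if'' direction, we argue contrapositively and assume $\sring$ is not locally finite. By \theoremname~\ref{thm:eq-iff-loc-fin}, $\Rec{\sring} \neq \RecA{\sring}$. Since $\Rec{\sring} \subseteq \RecA{\sring}$ always holds, the inclusion is strict. By \theoremname~\ref{thm:characterization}, $\RecA{\sring} = \HOM \mathrel{;} \Rec{\sring}$, and therefore $\Rec{\sring} \subsetneq \HOM \mathrel{;} \Rec{\sring}$. Concretely, the language of \examplename~\ref{ex:simple} (or its analogue built from non-local-finiteness) is the image of a recognizable language under an inverse tree homomorphism but is not itself recognizable.

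The one point needing care is the indexing of the classes. \theoremname~\ref{thm:eq-iff-loc-fin} is stated for classes over $\sring$ with arbitrary alphabets, and its ``only if'' part comes from string alphabets via \cite[Thm.~$7.1$]{kostolanyi2018alternating-wei}. We must check that the witness language separating $\Rec{\sring}$ from $\RecA{\sring}$ fits the same alphabet-free class that appears in \theoremname~\ref{thm:characterization}. This holds because all three results are phrased for the union over alphabets. Consequently no alphabet-level matching is needed, and the corollary follows formally.
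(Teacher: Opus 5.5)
Your proof is correct and follows the paper's argument: combine \theoremname~\ref{thm:characterization} ($\HOM \mathrel{;} \Rec{\sring} = \RecA{\sring}$) with \theoremname~\ref{thm:eq-iff-loc-fin}. The identity-homomorphism step and the indexing discussion are harmless but not needed. The concrete remark about \examplename~\ref{ex:simple} applies only to $\sring = \nat$, so it should not be presented as the general witness for an arbitrary non-locally-finite $\sring$.
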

\begin{proof}
 By \theoremname~\ref{thm:characterization}, $\HOM \mathrel{;} \Rec{\sring} = \RecA{\sring}$, and by \theoremname~\ref{thm:eq-iff-loc-fin}, $\RecA{\sring} = \Rec{\sring}$ if and only if $\sring$ is locally finite.
\end{proof}

\section{Conclusion}

We have introduced weighted alternating tree automata (wata) over commutative semirings, a common generalization of weighted tree automata (wta), alternating tree automata, and weighted alternating string automata. Their expressive power is equivalent to that of wta if weighted over semirings that are locally finite, and strictly greater if not (\theoremname~\ref{thm:eq-iff-loc-fin}). We have seen that the class of weighted tree languages defined by wata is the closure of the recognizable weighted tree languages under inverse tree homomorphisms (\theoremname~\ref{thm:characterization}). These results give us a clear picture of how wata and wta relate to one another, and provides evidence of a natural fit of wata into the automata landscape.

The theory of tree automata becomes much more complex upon introducing weights, and although there have been many focused efforts (see, e.g., \cite{fulop2022weighted-tree-a}), the weighted setting is still largely underexplored. The study of wata and related concepts can play a part in deepening the understanding of weighted tree languages. This further study could follow one or more of the routes outlined below.

As a first possible investigation, one could take inspiration from the unweighted setting and study the effects of \emph{bounded} alternation. In the context of wata, bounded alternation would correspond to a two-mode wata being restricted to transition between sum and product states at most a fixed number of times. Another option could be to turn to weight structures more general than semirings, such as strong bimonoids or multioperator monoids, and study wata defined over those. Finally, by specifying a suitable grammar model or (fragment of a) logic, one could establish correspondences in the style of Kleene or B\"{u}chi--Elgot--Trakhtenbrot, as has been done for wta~\cite{alexandrakis1987weighted-gramma,droste2006weighted-tree-a}. In addition, such connections can then potentially be leveraged to reason about decidability problems concerning wata, such as zeroness and equality.

\paragraph{Acknowledgments.}
The author wishes to thank Frank Drewes and Marco Kuhlmann for their insightful comments on this paper.
This work was partially supported by the Wallenberg AI, Autonomous Systems and Software Program (WASP) funded by the Knut and Alice Wallenberg Foundation.

%
%
\nocite{*}
\bibliographystyle{eptcs}
\bibliography{refs}

\begin{thebibliography}{10}
\providecommand{\bibitemdeclare}[2]{}
\providecommand{\surnamestart}{}
\providecommand{\surnameend}{}
\providecommand{\urlprefix}{Available at }
\providecommand{\url}[1]{\texttt{#1}}
\providecommand{\href}[2]{\texttt{#2}}
\providecommand{\urlalt}[2]{\href{#1}{#2}}
\providecommand{\doi}[1]{doi:\urlalt{https://doi.org/#1}{#1}}
\providecommand{\eprint}[1]{arXiv:\urlalt{https://arxiv.org/abs/#1}{#1}}
\providecommand{\bibinfo}[2]{#2}

\bibitemdeclare{article}{alexandrakis1987weighted-gramma}
\bibitem{alexandrakis1987weighted-gramma}
\bibinfo{author}{Athanasios \surnamestart Alexandrakis\surnameend} \&
  \bibinfo{author}{Symeon \surnamestart Bozapalidis\surnameend}
  (\bibinfo{year}{1987}): \emph{\bibinfo{title}{Weighted grammars and Kleene's
  theorem}}.
\newblock {\slshape \bibinfo{journal}{Information Processing Letters}}
  \bibinfo{volume}{24}(\bibinfo{number}{1}), pp. \bibinfo{pages}{1--4},
  \doi{10.1016/0020-0190(87)90190-6}.

\bibitemdeclare{inproceedings}{baader2016reasoning-with-}
\bibitem{baader2016reasoning-with-}
\bibinfo{author}{Franz \surnamestart Baader\surnameend} \&
  \bibinfo{author}{Andreas \surnamestart Ecke\surnameend}
  (\bibinfo{year}{2016}): \emph{\bibinfo{title}{Reasoning with Prototypes in
  the Description Logic {$\mathcal{A \mkern-3mu L \mkern-1mu C}$} Using
  Weighted Tree Automata}}.
\newblock In \bibinfo{editor}{Adrian-Horia \surnamestart Dediu\surnameend},
  \bibinfo{editor}{Jan \surnamestart Janou{\v{s}}ek\surnameend},
  \bibinfo{editor}{Carlos \surnamestart Mart{\'\i}n-Vide\surnameend} \&
  \bibinfo{editor}{Bianca \surnamestart Truthe\surnameend}, editors: {\slshape
  \bibinfo{booktitle}{Language and Automata Theory and Applications}},
  \bibinfo{publisher}{Springer International Publishing},
  \bibinfo{address}{Cham}, pp. \bibinfo{pages}{63--75},
  \doi{10.1007/978-3-319-30000-9_5}.

\bibitemdeclare{article}{borchardt2005bounds-for-tree}
\bibitem{borchardt2005bounds-for-tree}
\bibinfo{author}{Bj{\"{o}}rn \surnamestart Borchardt\surnameend},
  \bibinfo{author}{Zolt{\'{a}}n \surnamestart F{\"{u}}l{\"{o}}p\surnameend},
  \bibinfo{author}{Zsolt \surnamestart Gazdag\surnameend} \&
  \bibinfo{author}{Andreas \surnamestart Maletti\surnameend}
  (\bibinfo{year}{2005}): \emph{\bibinfo{title}{Bounds for Tree Automata with
  Polynomial Costs}}.
\newblock {\slshape \bibinfo{journal}{Journal of Automata, Languages and
  Combinatorics}} \bibinfo{volume}{10}(\bibinfo{number}{2--3}), pp.
  \bibinfo{pages}{107--157}, \doi{10.25596/jalc-2005-107}.
\newblock
  \urlprefix\url{https://www.researchgate.net/publication/220520579_Bounds_for_Tree_Automata_with_Polynomial_Costs}.

\bibitemdeclare{article}{borchardt2006cut-sets-as-rec}
\bibitem{borchardt2006cut-sets-as-rec}
\bibinfo{author}{Bj{\"o}rn \surnamestart Borchardt\surnameend},
  \bibinfo{author}{Andreas \surnamestart Maletti\surnameend},
  \bibinfo{author}{Branimir \surnamestart {\v S}e{\v s}elja\surnameend},
  \bibinfo{author}{Andreja \surnamestart Tepav{\v c}evi{\'c}\surnameend} \&
  \bibinfo{author}{Heiko \surnamestart Vogler\surnameend}
  (\bibinfo{year}{2006}): \emph{\bibinfo{title}{Cut sets as recognizable tree
  languages}}.
\newblock {\slshape \bibinfo{journal}{Fuzzy Sets and Systems}}
  \bibinfo{volume}{157}(\bibinfo{number}{11}), pp. \bibinfo{pages}{1560--1571},
  \doi{10.1016/j.fss.2005.11.004}.

\bibitemdeclare{article}{chandra1981alternation}
\bibitem{chandra1981alternation}
\bibinfo{author}{Ashok~K. \surnamestart Chandra\surnameend},
  \bibinfo{author}{Dexter~C. \surnamestart Kozen\surnameend} \&
  \bibinfo{author}{Larry~J. \surnamestart Stockmeyer\surnameend}
  (\bibinfo{year}{1981}): \emph{\bibinfo{title}{Alternation}}.
\newblock {\slshape \bibinfo{journal}{J. ACM}}
  \bibinfo{volume}{28}(\bibinfo{number}{1}), pp. \bibinfo{pages}{114--133},
  \doi{10.1145/322234.322243}.

\bibitemdeclare{inproceedings}{drewes2025dynamically-wei}
\bibitem{drewes2025dynamically-wei}
\bibinfo{author}{Frank \surnamestart Drewes\surnameend}, \bibinfo{author}{Marco
  \surnamestart Kuhlmann\surnameend} \& \bibinfo{author}{Olle \surnamestart
  Torstensson\surnameend} (\bibinfo{year}{2025}):
  \emph{\bibinfo{title}{Dynamically Weighted Tree Transducers}}.
\newblock In \bibinfo{editor}{Giuseppa \surnamestart Castiglione\surnameend} \&
  \bibinfo{editor}{Sabrina \surnamestart Mantaci\surnameend}, editors:
  {\slshape \bibinfo{booktitle}{Implementation and Application of Automata}},
  \bibinfo{publisher}{Springer Nature Switzerland}, \bibinfo{address}{Cham},
  pp. \bibinfo{pages}{115--128}, \doi{10.1007/978-3-032-02602-6_9}.

\bibitemdeclare{incollection}{droste2009semirings-and-f}
\bibitem{droste2009semirings-and-f}
\bibinfo{author}{Manfred \surnamestart Droste\surnameend} \&
  \bibinfo{author}{Werner \surnamestart Kuich\surnameend}
  (\bibinfo{year}{2009}): \emph{\bibinfo{title}{Semirings and Formal Power
  Series}}.
\newblock In \bibinfo{editor}{Manfred \surnamestart Droste\surnameend},
  \bibinfo{editor}{Werner \surnamestart Kuich\surnameend} \&
  \bibinfo{editor}{Heiko \surnamestart Vogler\surnameend}, editors: {\slshape
  \bibinfo{booktitle}{Handbook of Weighted Automata}},
  \bibinfo{publisher}{Springer Berlin Heidelberg}, \bibinfo{address}{Berlin,
  Heidelberg}, pp. \bibinfo{pages}{3--28}, \doi{10.1007/978-3-642-01492-5_1}.

\bibitemdeclare{article}{droste2006weighted-tree-a}
\bibitem{droste2006weighted-tree-a}
\bibinfo{author}{Manfred \surnamestart Droste\surnameend} \&
  \bibinfo{author}{Heiko \surnamestart Vogler\surnameend}
  (\bibinfo{year}{2006}): \emph{\bibinfo{title}{Weighted tree automata and
  weighted logics}}.
\newblock {\slshape \bibinfo{journal}{Theoretical Computer Science}}
  \bibinfo{volume}{366}(\bibinfo{number}{3}), pp. \bibinfo{pages}{228--247},
  \doi{10.1016/j.tcs.2006.08.025}.

\bibitemdeclare{article}{fulop2011weighted-extend}
\bibitem{fulop2011weighted-extend}
\bibinfo{author}{Zolt{\'{a}}n \surnamestart F{\"u}l{\"o}p\surnameend},
  \bibinfo{author}{Andreas \surnamestart Maletti\surnameend} \&
  \bibinfo{author}{Heiko \surnamestart Vogler\surnameend}
  (\bibinfo{year}{2011}): \emph{\bibinfo{title}{Weighted Extended Tree
  Transducers}}.
\newblock {\slshape \bibinfo{journal}{Fundam. Inform.}} \bibinfo{volume}{111},
  pp. \bibinfo{pages}{163--202}, \doi{10.3233/FI-2011-559}.

\bibitemdeclare{article}{fulop2022weighted-tree-a}
\bibitem{fulop2022weighted-tree-a}
\bibinfo{author}{Zolt{\'{a}}n \surnamestart F{\"{u}}l{\"{o}}p\surnameend} \&
  \bibinfo{author}{Heiko \surnamestart Vogler\surnameend}
  (\bibinfo{year}{2026}): \emph{\bibinfo{title}{Weighted Tree Automata -- May
  it be a little more?}}
\newblock {\slshape \bibinfo{journal}{CoRR}}
  \bibinfo{volume}{abs/2212.05529v3}, \doi{10.48550/ARXIV.2212.05529}.

\bibitemdeclare{article}{ghorani2016alternating-reg}
\bibitem{ghorani2016alternating-reg}
\bibinfo{author}{Maryam \surnamestart Ghorani\surnameend} \&
  \bibinfo{author}{Mohammad~Mehdi \surnamestart Zahedi\surnameend}
  (\bibinfo{year}{2016}): \emph{\bibinfo{title}{Alternating Regular Tree
  Grammars in the Framework of Lattice-Valued Logic}}.
\newblock {\slshape \bibinfo{journal}{Iranian Journal of Fuzzy Systems}}
  \bibinfo{volume}{13}(\bibinfo{number}{2}), pp. \bibinfo{pages}{71--94},
  \doi{10.22111/ijfs.2016.2360}.

\bibitemdeclare{article}{grabolle2023a-nivat-theorem}
\bibitem{grabolle2023a-nivat-theorem}
\bibinfo{author}{Gustav \surnamestart Grabolle\surnameend}
  (\bibinfo{year}{2023}): \emph{\bibinfo{title}{A Nivat Theorem for Weighted
  Alternating Automata over Commutative Semirings}}.
\newblock {\slshape \bibinfo{journal}{Logical Methods in Computer Science}}
  \bibinfo{volume}{Volume 19, Issue 4}:\bibinfo{eid}{27},
  \doi{10.46298/lmcs-19(4:27)2023}.

\bibitemdeclare{article}{kostolanyi2018alternating-wei}
\bibitem{kostolanyi2018alternating-wei}
\bibinfo{author}{Peter \surnamestart Kostol{\'a}nyi\surnameend} \&
  \bibinfo{author}{Filip \surnamestart Mi{\v s}{\'u}n\surnameend}
  (\bibinfo{year}{2018}): \emph{\bibinfo{title}{Alternating weighted automata
  over commutative semirings}}.
\newblock {\slshape \bibinfo{journal}{Theoretical Computer Science}}
  \bibinfo{volume}{740}, pp. \bibinfo{pages}{1--27},
  \doi{10.1016/j.tcs.2018.05.003}.

\bibitemdeclare{article}{seidl1994finite-tree-aut}
\bibitem{seidl1994finite-tree-aut}
\bibinfo{author}{Helmut \surnamestart Seidl\surnameend} (\bibinfo{year}{1994}):
  \emph{\bibinfo{title}{Finite tree automata with cost functions}}.
\newblock {\slshape \bibinfo{journal}{Theoretical Computer Science}}
  \bibinfo{volume}{126}(\bibinfo{number}{1}), pp. \bibinfo{pages}{113--142},
  \doi{10.1016/0304-3975(94)90271-2}.

\bibitemdeclare{article}{slutzki1985alternating-tre}
\bibitem{slutzki1985alternating-tre}
\bibinfo{author}{Giora \surnamestart Slutzki\surnameend}
  (\bibinfo{year}{1985}): \emph{\bibinfo{title}{Alternating tree automata}}.
\newblock {\slshape \bibinfo{journal}{Theoretical Computer Science}}
  \bibinfo{volume}{41}, pp. \bibinfo{pages}{305--318},
  \doi{10.1016/0304-3975(85)90077-5}.

\end{thebibliography}

\end{document}